\newtheorem{observation}{Observation}
\newtheorem{result}{Result}
\begin{document}
\title{Largest-Area Convex Quadrilateral in a $1.5$D Terrain}
%
%

\author{Ankush Acharyya\inst{1} \and Nandana Ghosh\inst{1}}
%
\authorrunning{Ankush Acharyya and Nandana Ghosh}
%
\institute{National Institute of Technology Durgapur, India \email{\{aacharyya.cse,~ng.23cs1109\}@nitdgp.ac.in}}


%
\maketitle              
%
\begin{abstract}
A $1.5$D terrain is a simple polygon bounded by a line segment $\ell$ and a polygonal chain monotone with respect to the line segment $\ell$. Usually, $\ell$ is chosen aligned to the $x$-axis, and is called the base of the terrain. In this paper, we consider the problem of finding a convex quadrilateral of largest area inside a $1.5$D terrain in $\mathbb{R}^2$.  We present an $O(n^2)$ time algorithm for this problem, where $n$ is the number of vertices of the terrain. Finally, we show that the largest area axis-parallel rectangle inside the terrain yields a $\frac{1}{2}$-approximation result to the largest convex quadrilateral problem.

\keywords{Inclusion problem \and Terrain \and Geometric optimization}
\end{abstract}

\section{Introduction}
Given a geometric shape $P$, finding a simpler shape $Q$ of optimal size either contained in or containing $P$ is a classical problem in computational geometry. The tightest outer approximation of $P$ is its convex hull, while inner approximation seeks a largest convex body $Q$ (by area, perimeter, or other measures) contained in $P$. These problems of the second type are known as the {\sf shape inclusion problem}, and often arise in areas such as pattern recognition and computer graphics~\cite{chang1986polynomial,goodman1981largest}. Specifically, we study the problem of computing a largest convex quadrilateral inscribed in a $1.5$D terrain denoted by $\mathfrak{T}$ which is a simple polygon in $\mathbb{R}^2$ bounded by an $x$-monotone polygonal chain and a horizontal line segment. Inclusion problems corresponding to convex quadrilaterals are widely used in applications such as stock cutting, computer-aided design, and occlusion culling due to their efficiency and flexibility in geometric approximation~\cite{dyckhoff1990typology,GermsJ01}.

\textit{Related work.} The largest shape inclusion problem has been widely studied for various choices of $P$ and $Q$. When $Q$ is convex and $P$ is a simple polygon, the problem is known as the {\sf potato-peeling} problem. It was introduced by Goodman~\cite{goodman1981largest}, who established structural properties for the restricted case where $P$ has at most five sides. The general case was solved by Chang and Yap~\cite{chang1986polynomial}, who gave an $O(n^7)$-time, $O(n^5)$-space algorithm, which is the best known to date.

Several variants have also been investigated. For the longest line segment in a simple polygon, Hall-Holt et al.~\cite{Hall-HoltKKMS06} obtained a $\tfrac{1}{2}$-approximation in $O(n\log n)$ time and a PTAS in $O(n\log^2 n)$ time. Other shapes have been considered as well: equilateral triangles and squares~\cite{depano1987finding}, general triangles~\cite{MelissaratosS92}, and axis-parallel rectangles~\cite{BolandU01a,DanielsMR97}. The largest $(\alpha,\beta)$-triangle of arbitrary orientation is computed in $O(n^2\log n)$ time~\cite{LeeEA21}, while the largest arbitrarily oriented rectangle can be found in $O(n^3)$ time~\cite{ChoiLA21}. Shape inclusion has also been studied when $P$ is convex; see~\cite{ChungBSYA26} and the references therein.

Although shape inclusion problems have been extensively studied for simple and convex polygons, the case where $P$ is a $1.5$D terrain remains largely unexplored. To date, only the largest-triangle variant has been addressed: Cabello et al.~\cite{CabelloDDM25} gave an $O(n \log n)$-time algorithm for the largest-area triangle, and Keikha~\cite{abs-2206-02396} presented an $O(n\log n)$-time algorithm for the largest-perimeter triangle in a terrain.


    

\textit{Our Results.}
In this paper, we study the problem of computing a largest-area convex quadrilateral $Q^*$ inscribed in a $1.5$D terrain $\mathfrak{T}$ with $n$ vertices (Figure~\ref{fig:def}).

\begin{itemize}
\item We establish several structural properties of optimal quadrilaterals in terrains and exploit them to design an exact algorithm that computes $Q^*$ in $O(n^2)$ time. The algorithm combines these structural insights with properties of butterfly structures~\cite{chang1986polynomial} and shortest-path trees~\cite{DasDM21} to efficiently characterize and search the space of candidate solutions.

\item We show that the area of a largest axis-parallel rectangle $\bigbox^*$ contained in $\mathfrak{T}$ is at least one half of the area of an optimal quadrilateral $Q^*$, thereby yielding a $\tfrac{1}{2}$-approximation.
\end{itemize}

\begin{figure}[t]
    \centering
    \includegraphics[scale=0.7]{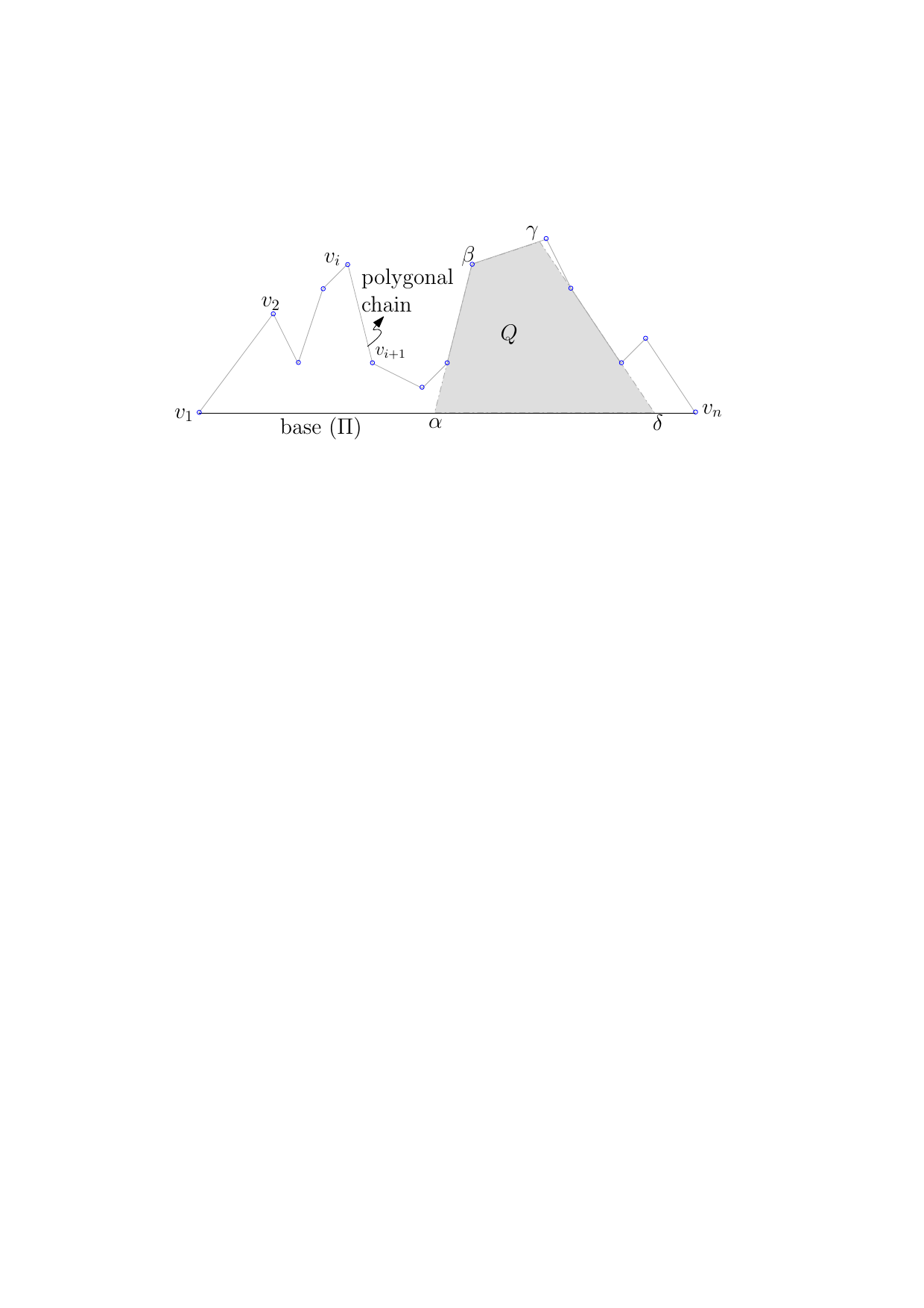}
    \caption{$Q=\Diamond \alpha\beta\gamma\delta$ is a maximal convex quadrilateral inscribed in a terrain.}
    \label{fig:def}
\end{figure}

\textit{Preliminaries.} We assume that the vertices $V(\mathfrak{T})=\{v_1,v_2,\ldots,v_n\}$ of $\mathfrak{T}$ are given in left-to-right order. A vertex $v_j$ is to the left (resp.\ right) of $v_i$ if $j<i$ (resp.\ $i<j$). A vertex $v$ of $\mathfrak{T}$ is {\sf convex} (resp.\ {\sf reflex}) if the interior angle of $\mathfrak{T}$ at $v$ is less (resp.\ greater) than $180^\circ$. The horizontal edge $(v_1,v_n)$ is called the {\sf base} of the terrain. Without loss of generality, we assume that the base coincides with the $x$-axis, denoted by $\Pi$, and that $\mathfrak{T}$ lies entirely in the first quadrant. For any point $p \in \mathbb{R}^2$, its coordinates are denoted by $(p_x,p_y)$, and the length of a segment $\overline{ab}$ is denoted by $|\overline{ab}|$.

By abuse of notation, we sometimes denote the area of a quadrilateral, triangle, or rectangle by $\Diamond$, $\triangle$, or $\bigbox$, respectively. Unless stated otherwise, all quadrilaterals considered in this paper are convex. For any geometric object, its vertices are listed in clockwise order starting from the leftmost vertex.

Throughout the paper, we assume general position: no three vertices of $\mathfrak{T}$ are collinear. Degenerate cases can be handled using standard techniques~\cite{BergCKO08,edelsbrunner1990simulation}. Finally, since a largest convex quadrilateral may degenerate into a general triangle, for which an $O(n\log n)$-time algorithm is known~\cite{CabelloDDM25}, we restrict attention to instances where the optimal quadrilateral is non-degenerate.

\textit{Geometric Facts.} We use the following results from basic Euclidean geometry.
\begin{result}\label{res:midpoint_thm}
If a line is drawn through the midpoint $M$ of one side $\overline{XY}$ of a triangle $\triangle XYZ$ and parallel to the other side $\overline{XZ}$ then,\newline (i) it bisects the third side $\overline{YZ}$ at a point, say $N$, where $|\overline{MN}| = \frac{1}{2}|\overline{XZ}|$, and \newline  (ii) the area of the rectangle with $\overline{MN}$ as its top-edge and the bottom-edge coinciding with the base $\overline{XZ}$ of $\triangle XYZ$ is half of the area of  $\triangle XYZ$.
\end{result}

\begin{result}~\cite{chakerian1971geometric} \label{res:uniquepoint_thm}
Let $\angle XOY$ be a given angle. Then, for each point $M$ interior to the angle, there exists exactly one line segment $\overline{AB}$ passing through $M$ where $A$ is on $\overrightarrow{OX}$, $B$ is on $\overrightarrow{OY}$ and is bisected at the point $M$. This line $\overline{AB}$, for the point $M$, is computed in $O(1)$ time.
\end{result}

\section[Computation of Q*]{Computation of $Q^*$}\label{sec:quad}

 We begin by establishing several structural properties of a largest-area convex quadrilateral $Q^*$ inscribed in $\mathfrak{T}$. To this end, we introduce the notion of a \emph{maximal} inscribed convex quadrilateral. Maximality is defined with respect to contact sets rather than area, since the local perturbation arguments used throughout the paper rely on preserving existing contacts with the terrain.
 
\begin{definition}\label{def:maximal_quad}
 A \emph{corner contact} occurs when a vertex of a convex quadrilateral lies on a vertex or an edge of $\mathfrak{T}$. Similarly, a \emph{side contact} occurs when a side of a convex quadrilateral touches a vertex or an edge of $\mathfrak{T}$. The set of all corner and side contacts of a quadrilateral $Q$ is called its \emph{contact set}. We say that $Q$ \emph{realizes} a contact set $C$ if $C$ is exactly the set of contacts induced by $Q$ with the boundary of $\mathfrak{T}$.

An inscribed convex quadrilateral $Q$ with contact set $C$ is \emph{maximal} if there is no inscribed convex quadrilateral $\widehat{Q}$ of larger area whose contact set $\widehat{C}$ satisfies $C \subseteq \widehat{C}$.   
\end{definition}

 Every maximal quadrilateral must have at least one boundary contact (Definition~\ref{def:maximal_quad}). Indeed, if its contact set were empty, then the quadrilateral could be enlarged slightly while remaining inside $\mathfrak{T}$, yielding a larger inscribed quadrilateral with the same (empty) contact set, contradicting maximality. The following observation strengthens this statement by showing that every edge of a maximal quadrilateral must participate in the contact set.

\begin{observation} \label{obs:touch}
Let $Q'$ be a maximal convex quadrilateral contained in a $1.5$D terrain $\mathfrak{T}$. Then every edge of $Q'$ is supported by the boundary of $\mathfrak{T}$, that is, each edge of $Q'$ has a non-empty intersection with some edge or vertex of $\mathfrak{T}$. In particular, (i) if a vertex of $Q'$ coincides with a vertex of $\mathfrak{T}$, we regard the two edges of $Q'$ incident to that vertex as touching $\mathfrak{T}$, (ii) if an edge of $Q'$ overlaps with an edge of $\mathfrak{T}$, then its endpoints are considered to touch that edge of $\mathfrak{T}$.
\end{observation}

\begin{proof}
For a $Q' \subset \mathfrak{T}$, we prove that every edge of $Q'$ must touch the boundary $\partial \mathfrak{T}$ of $\mathfrak{T}$ using a contradiction. Assume that there exists an edge $e$ of $Q'$ that does not touch any vertex or edge of $\mathfrak{T}$. Let the vertices of $Q'$ be $\alpha,\beta,\gamma,\delta$ in clockwise order and, without loss of generality, suppose that $e=\overline{\alpha\beta}$.

Since $Q' \subset \mathfrak{T}$, $Q'$ is compact and $\partial\mathfrak{T}$ is closed. As the segment $\overline{\alpha\beta}$ does not intersect $\partial\mathfrak{T}$, the distance between $\overline{\alpha\beta}$ and $\partial\mathfrak{T}$ is strictly positive. Let $\chi=\min_{p\in\overline{\alpha\beta}}\operatorname{dist}(p,\partial\mathfrak{T})>0$. Thus, the entire segment $\overline{\alpha\beta}$ lies in the interior of $\mathfrak{T}$ with a positive margin.

Let $\ell$ be the supporting line containing the edge $\overline{\alpha\beta}$ of $Q'$. Since $\overline{\alpha\beta}$ lies at a positive distance from $\partial\mathfrak{T}$, there exists $\varepsilon$ such that $0<\varepsilon<\chi$ and the line $\ell_\varepsilon$, obtained by translating $\ell$ outward by distance $\varepsilon$, remains entirely inside $\mathfrak{T}$. Consider the four supporting half-planes defining $Q'$. Replacing the half-plane bounded by $\ell$ with the corresponding half-plane bounded by $\ell_\varepsilon$, while keeping the other three supporting half-planes unchanged, yields a convex quadrilateral $Q_\varepsilon \subseteq \mathfrak{T}$.
Since $\ell_\varepsilon \neq \ell$, the quadrilateral $Q_\varepsilon$ strictly contains $Q'$. Consequently, $\operatorname{area}(Q_\varepsilon)>\operatorname{area}(Q')$. 
This contradicts the maximality of $Q'$. Therefore every edge of a maximal-area convex quadrilateral contained in $\mathfrak{T}$ must touch the boundary $\partial\mathfrak{T}$. 
\end{proof}

Since every largest-area inscribed quadrilateral is maximal, $Q^*$ satisfies Observation~\ref{obs:touch} and all subsequent structural properties established for maximal quadrilaterals. We next derive several additional properties of maximal quadrilaterals.

\subsection[Properties of of Q*]{Properties of $Q^*$}\label{sec:quad_prop}

\begin{lemma}\label{lem:base_on_base}
There exists a largest-area convex quadrilateral inscribed in $\mathfrak{T}$ having a side on $\Pi$.
\end{lemma}

\begin{proof} 
Let $Q=\Diamond\alpha\beta\gamma\delta$ be a largest-area convex quadrilateral inscribed in $\mathfrak{T}$ such that no side of $Q$ lies on $\Pi$. We show that there exists another convex quadrilateral inscribed in $\mathfrak{T}$, having a side on $\Pi$, whose area is equal to that of $Q$.

Without loss of generality, let $\alpha$ be a vertex of $Q$ with minimum $y$-coordinate. Since $\Pi$ is the lower boundary of $\mathfrak{T}$, the quadrilateral $Q$ can be translated vertically downward until at least one of its lowest vertices reaches $\Pi$. Because the terrain is $x$-monotone, this translation preserves containment in $\mathfrak{T}$ and does not change the area of $Q$. Therefore, without loss of generality, we may assume that $\alpha\in\Pi$.

If the minimum $y$-coordinate is attained by more than one vertex, then $Q$ has an edge parallel to $\Pi$. Translating $Q$ downward until that edge reaches $\Pi$ yields a largest-area quadrilateral having a side on $\Pi$, and the lemma follows.

Hence, it remains to consider the case where $\alpha$ is the unique lowest vertex of $Q$. In particular, neither of the edges incident to $\alpha$ is parallel to $\Pi$. We distinguish two cases according to the signs of the slopes of these two edges.

{\it Case I - Both edges adjacent to $\alpha$ have slopes of the same sign (either both positive or both negative):} We consider the case where the edges adjacent to $\alpha$ have positive slopes (see Figure~\ref{fig:both_positive}). The negative slope case is handled similarly.
    
If $\beta_x<\min(\gamma_x,\delta_x)$, then two situations can occur.  If  $\gamma_x\geq \delta_x$, then consider the projection $\gamma'$ of $\gamma$ on $\Pi$ (Figure~\ref{fig:both_positive}, left) and we have $\Diamond\alpha\beta\gamma\delta<\Diamond\alpha\beta\gamma\gamma'$; otherwise (i.e., if $\gamma_x< \delta_x$) we consider the projection $\delta'$ of $\delta$ on $\Pi$ (Figure~\ref{fig:both_positive}, middle). The triangles $\triangle \alpha\delta\delta'$ and $\triangle \gamma\delta\delta'$ share a common base $\overline{\delta\delta'}$, while $\triangle \alpha\delta\delta'$ has a larger height. Therefore, $\Diamond\alpha\beta\gamma\delta<\Diamond\alpha\beta\gamma\delta'$. 
    
If $\beta_x>(\gamma_x,\delta_x)$ (see Figure~\ref{fig:both_positive}, right), then considering the projection $\delta'$ of $\delta$ we have $\Diamond\alpha\beta\gamma\delta<\Diamond\alpha\beta\gamma\delta'$. 

\begin{figure}[!htbp]
  \begin{subfigure}{0.99\textwidth}
    \centering
    \includegraphics[scale=0.75]{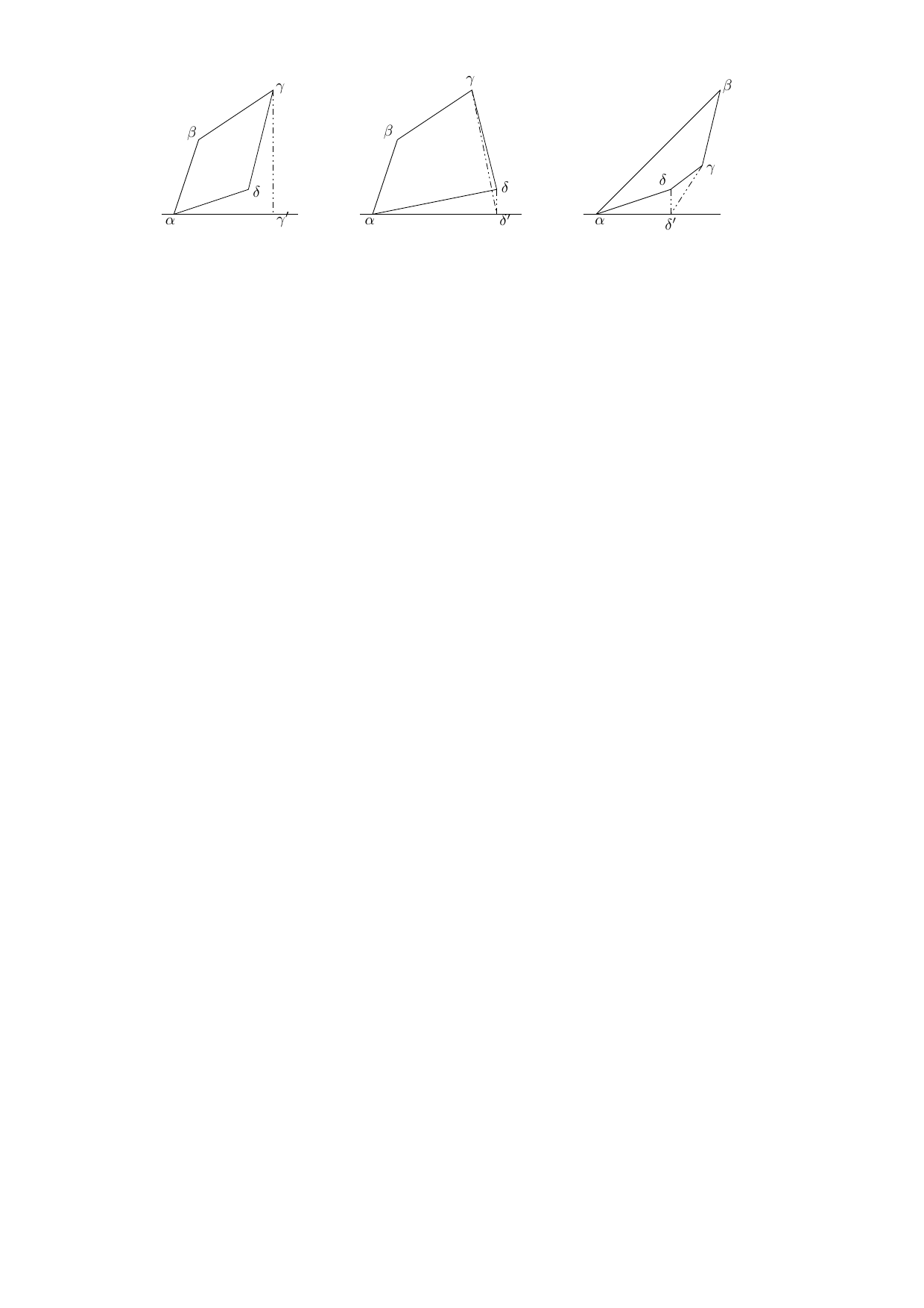}
    \caption{Both sides adjacent to $\alpha$ have same (positive) slopes}
    \label{fig:both_positive}
  \end{subfigure}
  
  \begin{subfigure}{0.99\textwidth}
    \centering
    \includegraphics[scale=0.75]{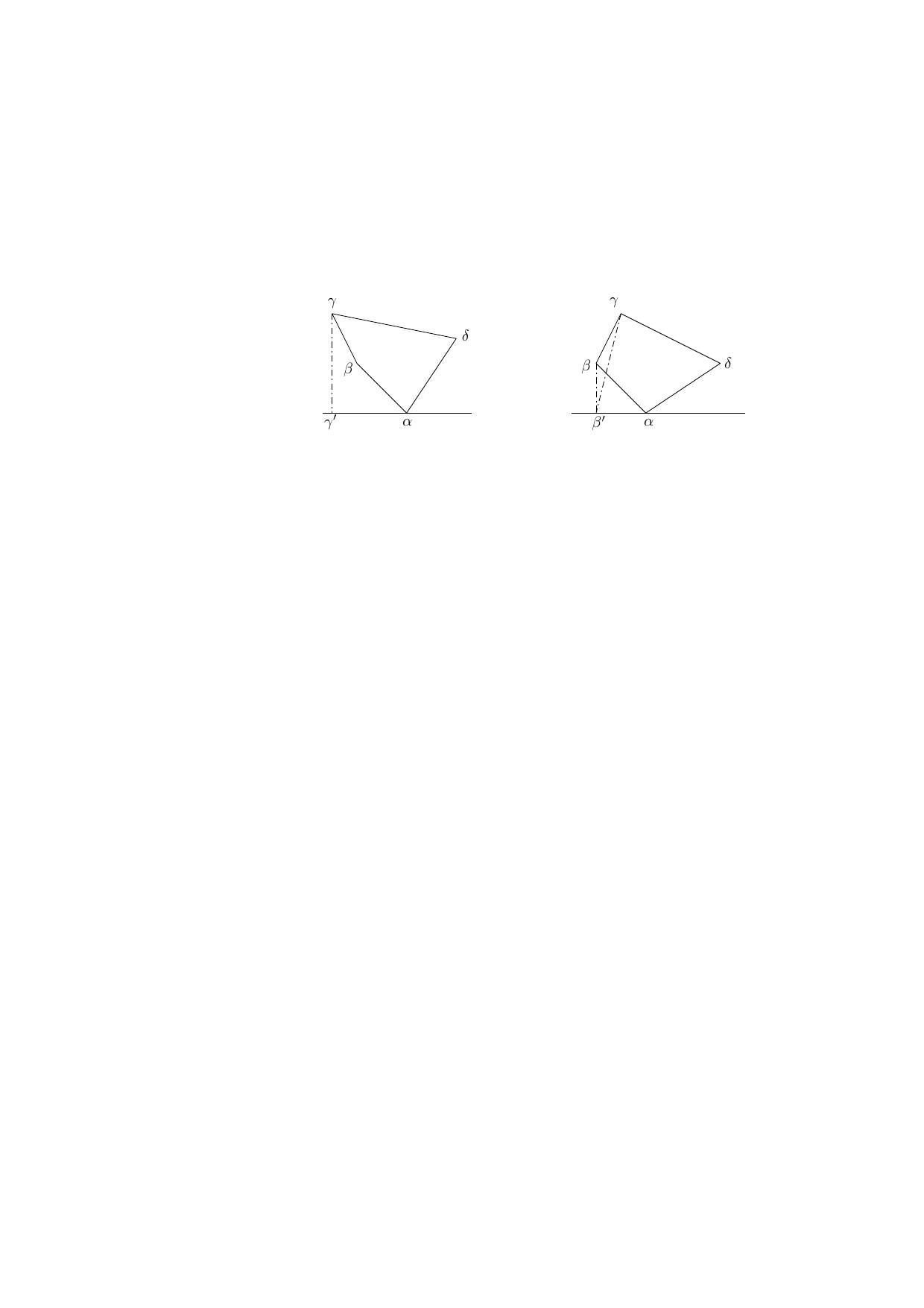}
    \caption{Sides adjacent to $\alpha$ have different (positive and negative) slopes}
    \label{fig:positive_negative}
  \end{subfigure}%
  
 \caption{Transforming an optimal quadrilateral into one having a side on $\Pi$.}
  \label{fig:base}
\end{figure}

{\it Case II - the two edges adjacent to $\alpha$ have positive and negative slopes, respectively:} As $\alpha$ is the vertex of $Q$ having the minimum $y$-coordinate, and the vertices are named in clockwise order, the edges $\overline{\alpha\beta}$ and $\overline{\alpha\delta}$ will have negative and positive slopes, respectively. 
    
Now, if $\gamma_x < \beta_x$  (resp. $\gamma_x > \delta_x$) then $\Diamond\alpha\beta\gamma\delta<\Diamond\alpha\gamma'\gamma\delta$ (resp. $\Diamond\alpha\beta\gamma\delta<\Diamond\alpha\beta\gamma\gamma'$); here $\gamma'$ is the projection of $\gamma$ on $\Pi$ (see Figure~ \ref{fig:positive_negative}-left). 
    
Next, if $\beta_x < \gamma_x < \delta_x$ (see Figure~ \ref{fig:positive_negative}-right) then we have two cases: (i) $\gamma_x\leq \alpha_x$ and (ii) $\gamma_x> \alpha_x$. When $\gamma_x\leq \alpha_x$, consider the projection $\beta'$ of $\beta$ on $\Pi$. As $\gamma_x\leq \alpha_x$, clearly $\triangle \beta\beta'\gamma \leq \triangle \beta\beta'\alpha \implies \Diamond\alpha\beta\gamma\delta \leq \Diamond\alpha\beta'\gamma\delta$. The case $\gamma_x> \alpha_x$ can be handled similarly.  
    
Considering all possible cases, the claim follows.  
 \end{proof}

Lemma~\ref{lem:base_on_base} allows us to restrict attention to maximal quadrilaterals having a side on $\Pi$. For any maximal quadrilateral $Q'$, the side lying on $\Pi$ is called its {\sf base} ($\cal B$). The two edges incident to $\cal B$ are called the {\sf left edge} ($\cal L$) and {\sf right edge} ($\cal R$), and the remaining edge is called the {\sf top edge} ($\cal T$). Lemma~\ref{lem:base_on_base} and the $x$-monotonicity of $\mathfrak{T}$ imply the following geometric constraint on the angles incident to the base.

\begin{corollary}\label{cor:slope}
Let $\theta$ (resp. $\phi$) be the internal angle of $Q'
$ at the left (resp. right) endpoint of the base $\cal B$. Then $0^\circ < \theta,\phi \le 90^\circ$.
\end{corollary}

To further characterize maximal quadrilaterals, we introduce the following notion.
 

\begin{definition}\label{def:extremal_supp_line} 
Let $\ell$ be a supporting line of $\mathfrak{T}$ passing through a terrain vertex $v$. We say that $\ell$ is \emph{locally extremal} if there exists no sufficiently small feasible perturbation obtained by rotating $\ell$ about $v$ that remains locally inside $\mathfrak{T}$ near $v$. 
\end{definition}

\begin{lemma}\label{lem:extremal_line_chord}
Every locally extremal supporting line of $\mathfrak{T}$ coincides with the supporting line of a terrain edge. Consequently, it contains two vertices of $\mathfrak{T}$.
\end{lemma}

\begin{proof}
Let $\ell$ be a locally extremal supporting line passing through a terrain vertex $v$ inside $\mathfrak{T}$, and let $e_1,e_2$ be the two terrain edges incident to $v$.

Suppose that $\ell$ does not coincide with the supporting line of either $e_1$ or $e_2$. Then $\ell$ is distinct from the supporting lines of both incident edges. Consequently, there exists $\varepsilon>0$ such that every rotation of $\ell$ about $v$ by an angle of magnitude at most $\varepsilon$ remains distinct from the supporting lines of $e_1$ and $e_2$. Since the terrain is locally bounded near $v$ by the two incident edges, such sufficiently small rotations remain locally inside $\mathfrak{T}$ near $v$. Observe that local extremality is defined only with respect to the behavior of the line in an arbitrarily small neighborhood of $v$; global obstructions elsewhere on the terrain do not affect the existence of such a local perturbation. Hence $\ell$ admits a sufficiently small feasible perturbation, contradicting the assumption that $\ell$ is locally extremal. Therefore, $\ell$ must coincide with the supporting line of one of the terrain edges incident to $v$. Since every terrain edge has two endpoints, the supporting line of that edge contains at least two terrain vertices.
\end{proof}

We next establish two important structural properties of a maximal quadrilateral $Q'$ inscribed in $\mathfrak{T}$ depending on the slope of its {\sf top edge} $\cal T$.

\begin{figure}[t]
    \centering
    \includegraphics[scale=0.4]{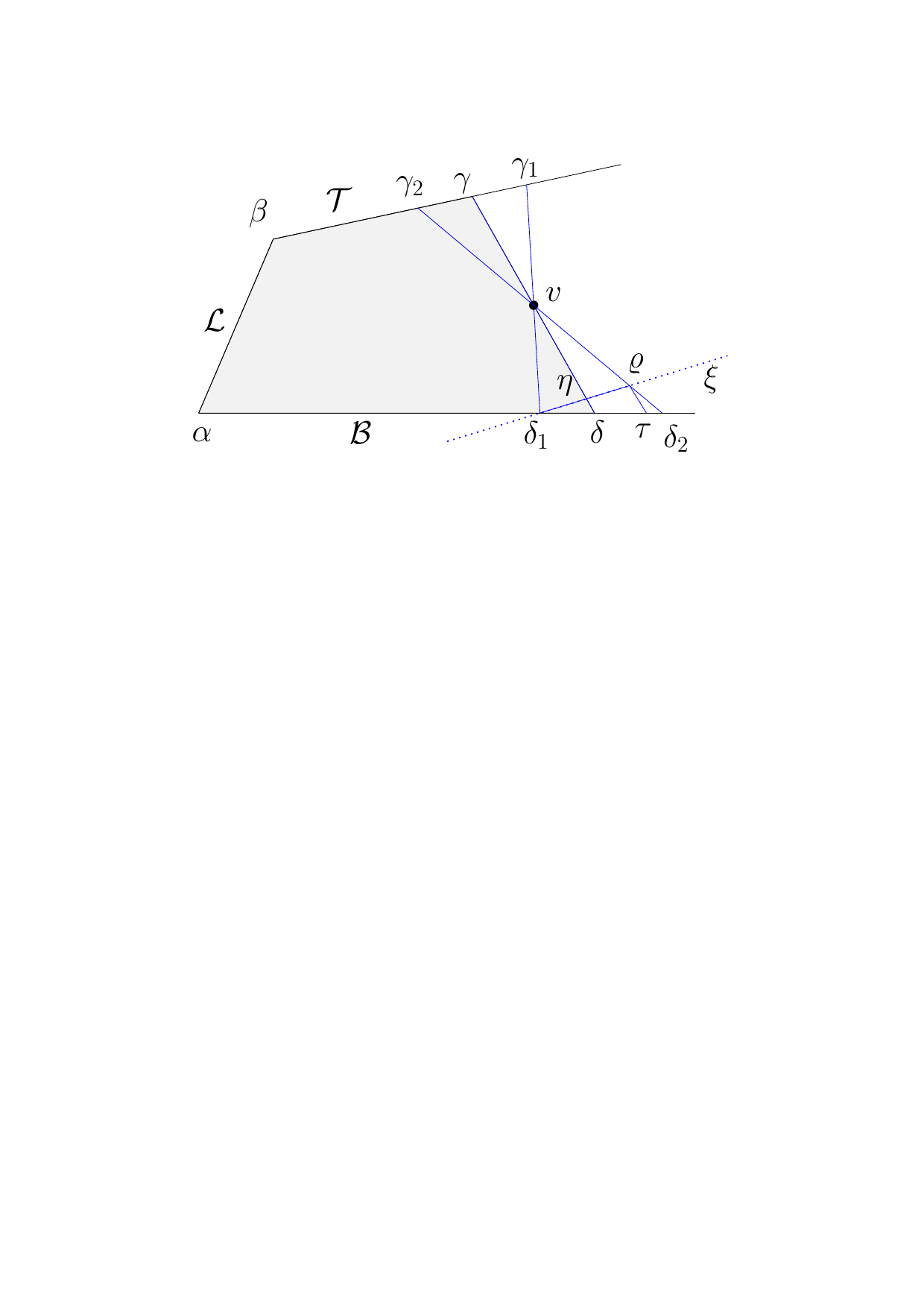}
        \caption{For a maximal quadrilateral inside $\mathfrak{T}$, if $\cal T$ has positive slope then $\cal R$ is supported by a line containing two vertices of $\mathfrak{T}$.}
        \label{fig:tp}
\end{figure}

\begin{lemma}\label{lem:twopoint}
If the {\sf top edge} $\cal T$ of any maximal quadrilateral $Q'$
has positive slope (resp. negative slope), then its {\em right edge} $\cal R$ (resp. {\em left edge} $\cal L$) is supported by a line containing two vertices
of $\mathfrak{T}$.
\end{lemma}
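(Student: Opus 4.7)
I would proceed by contradiction: suppose ${\cal R}$ passes through at most one vertex of $\mathfrak{T}$. By Observation~\ref{obs:touch} and the general position assumption (no three vertices of $\mathfrak{T}$ are collinear, so if ${\cal R}$ contained a terrain edge it would already pass through that edge's two endpoints), ${\cal R}$ must pass through exactly one terrain vertex $v$, lying strictly in the relative interior of segment $CD$.

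Because ${\cal T}$ has positive slope, its supporting line $\ell_{\cal T}$ meets the base line $\Pi$ at a single point $O$ lying strictly to the left of $Q'$. Let $\triangle OCD$ be the triangle cut off by ${\cal R}$ from the angle at $O$ and $\triangle OAB$ the triangle cut off by ${\cal L}$; then
\[
\Diamond Q' \;=\; \triangle OCD \;-\; \triangle OAB,
\]
where the smaller triangle depends only on the three supporting lines $\Pi,\ \ell_{\cal L},\ \ell_{\cal T}$ and is unaffected by moving ${\cal R}$. I would then rotate ${\cal R}$ about $v$: as the line through $v$ rotates, $C$ slides along $\ell_{\cal T}$ and $D$ along $\Pi$, tracing out a one-parameter family of candidate right edges. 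By Result~\ref{res:uniquepoint_thm} there is a unique line in this family for which $v$ is the midpoint of $CD$, and a short AM--GM computation on $\triangle OCD = \tfrac{1}{2}\,|OC|\cdot|OD|\,\sin\angle O$ shows that this midpoint position is the \emph{unique} critical point of $\triangle OCD$ in the family and is in fact a strict local minimum.

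Consequently, whatever the position of $v$ on $CD$, there is a direction of rotation about $v$ which strictly increases $\Diamond Q'$. Since $v$ is the only contact of ${\cal R}$ with $\mathfrak{T}$, all sufficiently small rotations in that direction keep the rotated right edge inside $\mathfrak{T}$, producing an inscribed quadrilateral of strictly greater area than $Q'$. As in the area-increase arguments used in the proof of Lemma~\ref{lem:base_on_base}, this contradicts the maximality of $Q'$, and so ${\cal R}$ must in fact pass through a second terrain vertex. The main obstacle I foresee is bridging this area increase to the precise ``proper containment'' formulation of maximality: one has to promote the rotated quadrilateral to one that properly contains $Q'$, most naturally by taking a convex hull with $Q'$ and verifying via a short case analysis around $v$ (whether $v$ is a convex or reflex vertex of the upper chain) that this hull remains inside $\mathfrak{T}$.
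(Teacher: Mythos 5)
Your core argument is correct and reaches the same geometric fact as the paper, but by a genuinely different route. The paper argues synthetically: it performs two small rotations of ${\cal R}$ about the single contact vertex $v_i$, one clockwise and one counterclockwise with equal displacement $\sigma$ of the top endpoint, and uses a congruent/similar-triangle comparison (via Result~\ref{res:midpoint_thm}) to show that if one rotation loses area then the other must gain, so some rotation always wins. You instead exploit the wedge structure directly: since $\cal T$ has positive slope, $\ell_{\cal T}$ and $\Pi$ meet at an apex $O$, $\Diamond Q'=\triangle OCD-\triangle OAB$ with the subtrahend fixed, and the constraint that the variable chord passes through $v$ gives $p/|OC|+q/|OD|=1$, whence $|OC|\,|OD|$ has a unique critical point (a strict minimum, at the balanced chord) by AM--GM. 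This is cleaner, quantitatively sharper (it identifies the balanced position as the unique minimum rather than just producing one winning direction), and it makes explicit why this configuration is a V-butterfly in the sense of Result~\ref{res:butterfly}, which the paper only invokes later in Observation~\ref{obs:mix1}.

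There is one concrete omission: you assert without justification that the single contact vertex $v$ lies strictly in the relative interior of $CD$. It can instead be the right base vertex $v_n$ (an endpoint of ${\cal R}$ on $\Pi$) or a convex vertex of the upper chain coinciding with the top-right corner of $Q'$; the paper treats these as separate cases (rotate clockwise about the base vertex; observe that contact at a convex non-base vertex forces ${\cal R}$ to be supported by a terrain edge, hence already extremal). Your wedge formula still covers the endpoint cases ($|OD|$ or $|OC|$ becomes constant and the area is monotone in the other factor), but the feasibility of the rotation inside $\mathfrak{T}$ is exactly what changes there, so these cases cannot be skipped. Your closing worry about ``proper containment'' is well taken --- the rotated quadrilateral does not contain $Q'$, and a convex hull with $Q'$ is generally a pentagon, so that repair does not work --- but note that the paper's own proof has the same feature and is really an argument about (local) maxima of area rather than maximality under containment; it suffices for the algorithm, which only needs the property for $Q^*$.
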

\begin{proof}
 We prove the statement for the case where the top edge $\cal T$ has positive slope; the negative-slope case is symmetric.

Let $Q'=\Diamond \alpha\beta\gamma\delta$ be a maximal quadrilateral inscribed in $\mathfrak{T}$, where the top edge ${\cal T}=\overline{\beta\gamma}$ has positive slope
and the right edge is ${\cal R}=\overline{\gamma\delta}$.
Suppose, for the sake of contradiction, that $\cal R$ passes through exactly one vertex $v\in V(\mathfrak{T})$. We show that in this case ${\cal R}$ can be rotated slightly about $v$ while remaining inside the terrain, producing a convex quadrilateral of strictly larger area, contradicting the maximality of $Q'$.

 \noindent \textbf{$v$ is a base vertex.}
If $v$ is a base vertex of the terrain, then either $\cal R$
admits a sufficiently small feasible rotation about $v$, or
$\cal R$ lies on a locally extremal supporting line of the terrain. In the former case, the suitably chosen feasible rotation strictly increases the area of the
quadrilateral, contradicting the maximality of $Q'$. In the latter case, Lemma~\ref{lem:extremal_line_chord} implies that
the supporting line of $\cal R$ coincides with the supporting line of
a terrain edge and therefore contains two vertices of
$\mathfrak{T}$. Hence the desired conclusion already holds.

\noindent \textbf{$v$ is a convex vertex of the terrain (not on base).} In this case, any line through $v$ that lies entirely inside the terrain must coincide with a supporting line of an incident terrain edge. Hence, $R$ must pass through another terrain vertex, contradicting the assumption.

\noindent\textbf{$v$ is a reflex vertex of the terrain.} 
Consider the edge $\cal R=\overline{\gamma\delta}$ passing through a single
reflex vertex $v$. We rotate $\cal R$ in both clockwise and anticlockwise
directions around $v$ so that the rotated positions of $\cal R$ are
$\overline{\gamma_1\delta_1}$ and $\overline{\gamma_2\delta_2}$,
respectively, satisfying $|\overline{\gamma_2\gamma}|=|\overline{\gamma\gamma_1}|=\sigma$. For a sufficiently small value of $\sigma>0$, such rotations are always
possible, when $\cal R$ is not supported by a locally extremal supporting line of the terrain. Let, $Q_1=\Diamond \alpha\beta\gamma_1\delta_1$ and $Q_2=\Diamond \alpha\beta\gamma_2\delta_2$ denote the quadrilaterals formed by the clockwise and anticlockwise
rotations of $\overline{\gamma\delta}$, respectively.
We show that at least one of $Q_1$ and $Q_2$ has larger area than
$Q'$.

If $\mathrm{area}(Q_1)> \mathrm{area}(Q')$, then the maximality of $Q'$ is contradicted. So, let $\mathrm{area}(Q_1)\leq \mathrm{area}(Q')$. We show that this implies $\mathrm{area}(Q_2)\geq \mathrm{area}(Q')$.

Consider a line $\xi$ parallel to $\cal T$ passing through the point $\delta_1$ which intersects $\overline{\gamma\delta}$ and $\overline{\gamma_2\delta_2}$ at $\eta$ and $\varrho$, respectively. Then the pairs of triangles $(\triangle \gamma\gamma_1 v,\triangle \delta_1 v\eta)$ and $(\triangle \gamma_2\gamma v,\triangle v\varrho\eta)$ are similar. Since $|\gamma_2\gamma|=|\gamma\gamma_1|$, the triangles $\triangle \gamma\gamma_1 v$ and $\triangle \gamma_2\gamma v$ have equal area, because they have equal bases on the line $\cal T$ and the same height from $v$ to $\cal T$. Due to the similarity argument, the triangles $\triangle \delta_1v\eta$ and $\triangle \eta v\varrho$ also have equal area. Since these two triangles share the same height from $v$ to the line $\overline{\delta_1\varrho}$, we obtain $|\eta\delta_1|=|\eta\varrho|$; that is, $\eta$ is the midpoint of $\overline{\delta_1\varrho}$. Now consider the line segment through $\varrho$ parallel to $\overline{\eta\delta}$, and let it intersects $\cal B$ at the point $\tau$. By Fact~\ref{res:midpoint_thm}~(ii), we have $\triangle \delta_1\eta\delta < \Diamond \eta\varrho\tau\delta <\Diamond \eta\varrho\delta_2\delta$. Now,
\begin{align*}
\Diamond \alpha\beta\gamma\delta
>
\Diamond \alpha\beta\gamma_1\delta_1
&\implies
\operatorname{area}(\triangle v\delta\delta_1)
>
\operatorname{area}(\triangle \gamma\gamma_1 v)
\\
&\implies
\operatorname{area}(\triangle v\delta\delta_1)
>
\operatorname{area}(\triangle \gamma_2\gamma v) \quad (\because~
\operatorname{area}(\triangle \gamma\gamma_1 v)=
\operatorname{area}(\triangle \gamma_2\gamma v))
\\
&\implies
\operatorname{area}(\triangle v\delta_2\delta)
>
\operatorname{area}(\triangle \gamma_2\gamma v) \quad
(\because~
\operatorname{area}(\triangle v\eta\delta_1)
=
\operatorname{area}(\triangle v\varrho\eta))
\\
&\implies
\Diamond \alpha\beta\gamma_2\delta_2
>
\Diamond \alpha\beta\gamma\delta.
\end{align*}

Thus, unless $\cal R$ is supported by a line containing two terrain vertices, a feasible perturbation of $\cal R$ yields a larger-area quadrilateral, contradicting the maximality of $Q'$.
\end{proof}

In the sequel, depending on the slope of the top edge $\cal T$ of any maximal quadrilateral $Q'$, we also have the following:

\begin{lemma}\label{lem:midpoint}
Let $Q'$ be a maximal quadrilateral inscribed in $\mathfrak{T}$. Suppose that the top edge $\cal T$ of $Q'$ has negative (resp. positive) slope and that the right (resp. left) edge contains a terrain vertex $v$. Then either \newline
(a) $v$ is the midpoint of that edge, or \newline
(b) that edge is supported by a line containing at least two vertices
of $\mathfrak{T}$.
\end{lemma}

\begin{proof} 
 We prove the statement for the case where the top edge $\cal T$ has negative slope; the positive-slope case is symmetric.

Let $Q'=\Diamond \alpha\beta\gamma\delta$ be a maximal quadrilateral inscribed in $\mathfrak{T}$, where $\overline{\alpha\delta}$ is the base, ${\cal T}=\overline{\beta\gamma}$ is the top edge with negative slope, and ${\cal R}=\overline{\gamma\delta}$ is the right edge. If $\cal R$ is supported by a line containing at least two vertices of $\mathfrak{T}$, then condition~(b) holds. Hence, assume that $\cal R$ contains exactly one terrain vertex $v$. 

If $v$ is a base vertex or a convex terrain vertex, then by arguments similar to those used in Lemma~\ref{lem:twopoint}, either $Q'$ is not maximal or $\cal R$ is supported by a line containing at least two vertices of $\mathfrak{T}$. Since the latter possibility has been excluded, neither case can occur. Therefore we consider the case where $v$ is a reflex vertex of the terrain.

\textbf{Case (a):} Suppose first that $|v\gamma| = |v\delta|$. We show that $Q'$ is maximal with respect to rotations of $\mathcal{R}$ about $v$.

Let $\mathcal{R}_1=\overline{\gamma_1\delta_1}$ be obtained by a sufficiently small clockwise rotation of $\mathcal{R}$ about $v$, where $\gamma_1$ (resp. $\delta_1$) is the intersection of the rotated $\cal R$ with extended $\cal T$ (resp. $\cal B$). Since $\cal R$ does not lie on a line containing at least two vertices of $\mathfrak{T}$ and $v$ is reflex, such a small rotation remains feasible inside the terrain. The two quadrilaterals $Q'=\Diamond \alpha\beta\gamma\delta$ and $Q_1=\Diamond \alpha\beta\gamma_1\delta_1$ differ only in the two wedge-shaped regions adjacent to the rotated edge. Hence, $\operatorname{area}(Q_1)-\operatorname{area}(Q')
=\operatorname{area}(\triangle \gamma\gamma_1 v)-\operatorname{area}(\triangle \delta\delta_1 v)$.

\begin{figure}[!htbp]
     \centering
    \includegraphics{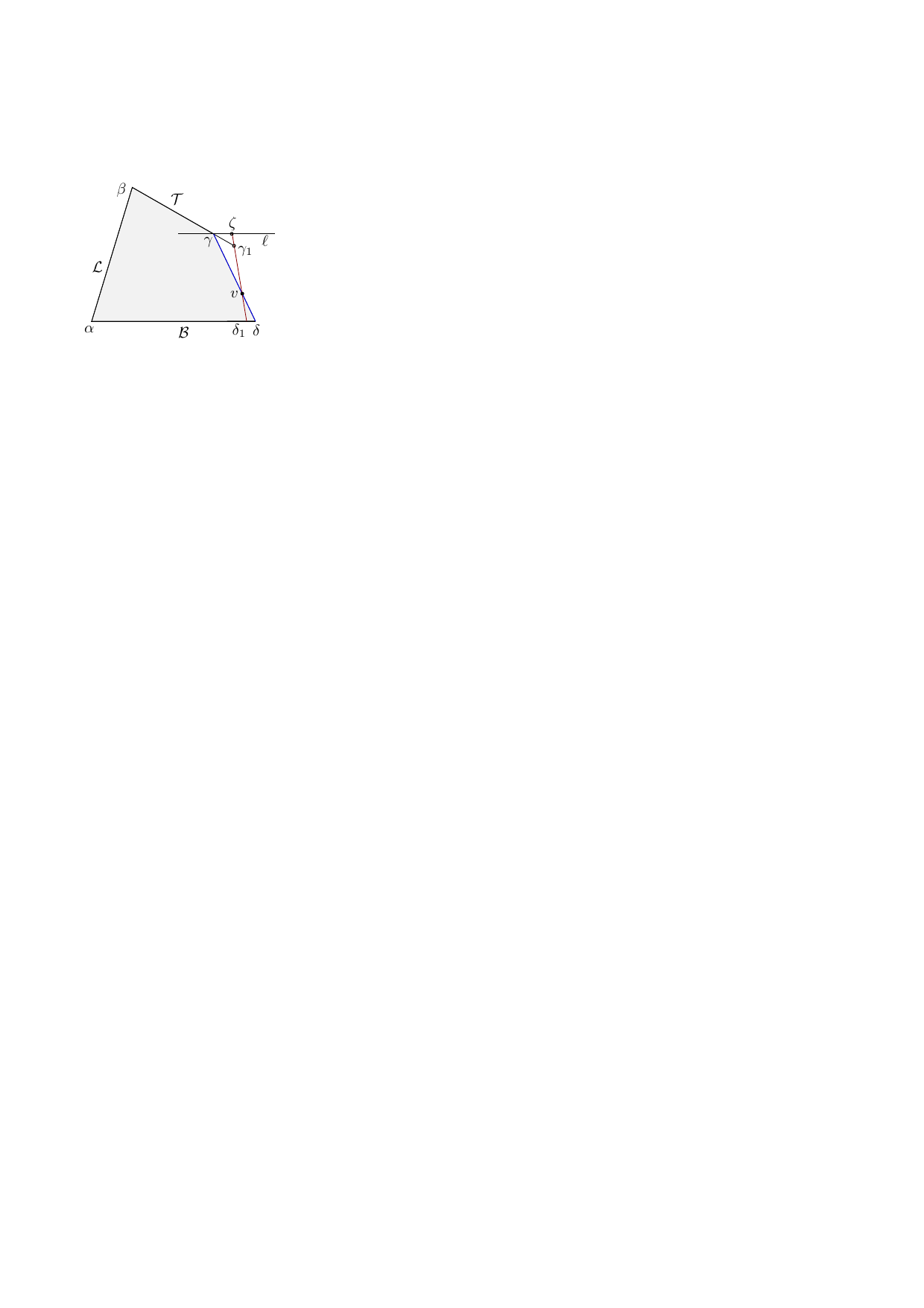}
    \caption{For a maximal quadrilateral inside $\mathfrak{T}$, if $\cal T$ has negative slope, then $\cal R$ is either supported by a line through at least two terrain vertices or bisected by a terrain vertex.}
    \label{fig:mp} 
\end{figure}

Assume, for contradiction, that $\operatorname{area}(Q_1)>\operatorname{area}(Q')$. Then $\operatorname{area}(\triangle \gamma\gamma_1 v)>\operatorname{area}(\triangle \delta\delta_1 v).$

Now consider the horizontal line through $\gamma$. Let this line intersect the rotated edge $\overline{\gamma_1\delta_1}$ at a point $\zeta$. Since $\mathcal{T}$ has negative slope and $\mathcal{B}$ lies on the $x$-axis, the point $\zeta$ lies on the extension of $\overline{\gamma_1\delta_1}$ beyond $\gamma_1$. From our construction, $\angle \zeta\gamma v=\angle \delta_1\delta v$, and $\angle \zeta v\gamma=\angle \delta v\delta_1$. Hence, $\triangle \zeta\gamma v\cong \triangle \delta_1\delta v$. Since $\zeta$ lies beyond $\gamma_1$ on the line $\overline{\gamma_1\delta_1}$, the triangle $\triangle \zeta\gamma v$ strictly contains $\triangle \gamma\gamma_1 v$. Consequently, $\operatorname{area}(\triangle \delta_1\delta v)>\operatorname{area}(\triangle \gamma\gamma_1 v)$, which contradicts our assumption of $\operatorname{area}(Q_1)>\operatorname{area}(Q')$.

Hence a clockwise rotation cannot increase the area. By a symmetric argument, an anticlockwise rotation also cannot increase the area. Therefore, when $|v\gamma|=|v\delta|$, the quadrilateral is maximal.

\textbf{Case (b):} Now suppose that $Q'=\Diamond \alpha\beta\gamma\delta$ be a maximal quadrilateral and $v$ is not the midpoint of $\mathcal{R}$.
Without loss of generality assume that $|v\gamma| > |v\delta|$ (see Figure~\ref{fig:mp}). Rotate $\mathcal{R}$ slightly clockwise about $v$ and let the rotated edge be $\mathcal{R}'=\overline{\gamma_1\delta_1}$ such that $|v\gamma_1|$ remains larger than $|v\delta_1|$ (since the rotation is sufficiently small and the distances vary continuously, such a choice is always possible), where $\gamma_1$ (resp. $\delta_1$) is the intersection of the rotated $\cal R$ with extended $\cal T$ (resp. $\cal B$). 

As before, $\operatorname{area}(Q_1)-\operatorname{area}(Q')
=\operatorname{area}(\triangle \gamma\gamma_1 v)-\operatorname{area}(\triangle \delta\delta_1 v)$. Let $\theta$ denote the angle of rotation. Thus, $\operatorname{area}(\triangle \gamma\gamma_1 v)=\frac12 |v\gamma||v\gamma_1| \sin\theta$ and $\operatorname{area}(\triangle \delta\delta_1 v)=\frac12 |v\delta||v\delta_1| \sin\theta$.
Since $|v\gamma|>|v\delta|$, and due to our construction $|v\gamma_1|>|v\delta_1|$, Consequently, $\operatorname{area}(\triangle \delta v\delta_1)<\operatorname{area}(\triangle \gamma_1\gamma v) \implies \operatorname{area}(Q_1)>\operatorname{area}(Q')$, contradicting the maximality of $Q'$.

Similarly, if $|v\gamma| < |v\delta|$, then a sufficiently small counter-clockwise rotation of $\cal R$ about $v$ yields a quadrilateral of larger area, contradicting the maximality of $Q'$.

Consequently, for a maximal quadrilateral, if the right edge $\cal R$ contains exactly one terrain vertex, then that vertex must be the midpoint of the edge. Otherwise, $\cal R$ is supported by a line containing at least two vertices of $\mathfrak{T}$.
\end{proof}

Before deriving further structural properties of a largest quadrilateral in $\mathfrak{T}$, we introduce several notions that will be used throughout the remainder of the paper. Lemmas~\ref{lem:twopoint} and \ref{lem:midpoint} suggest that the geometry of a maximal quadrilateral is closely related to certain distinguished line segments contained in the terrain, motivating the following definitions.

A {\sf chord} is a maximal line segment entirely contained within the terrain, and the line containing it is called its {\sf supporting line}. A chord is {\sf extremal} if its supporting line contains at least two vertices of $\mathfrak{T}$.

Geometrically, any convex quadrilateral can be represented as the intersection of four supporting half-planes, one corresponding to each of its sides. Since one side of a maximal quadrilateral lies on the terrain base, our objective is to identify three additional supporting lines whose associated half-planes, together with the half-plane above the base, define a maximum-area feasible region. Let $C_\ell$ denote the supporting line of a chord $C$. If $C_\ell$ has positive (resp. negative) slope, then $C^+$ denotes the closed half-plane on the clockwise (resp. counterclockwise) side of $C_\ell$. Similarly, let $B^+$ denote the closed half-plane above the base $\cal B$. Our objective is therefore to select three chords $C_1, C_2, C_3$ such that the closed convex region 
$\left( \bigcap_{i=1}^{3} C_i^+ \right) \cap B^+ \cap \mathfrak{T}$ has largest area.

To solve the above optimization problem, we make use of the {\sf butterfly region}~\cite{chang1986polynomial}. We briefly recall the relevant definitions. Consider two adjacent extremal chords $\overline{aa'}$ and $\overline{bb'}$ that intersect at a point $u$ (see Figure~\ref{fig:butterfly}). The region formed by these chords is called a butterfly $B$ with center $u$. The segments $\overline{ab}$ and $\overline{a'b'}$ are called the {\sf tips} of the butterfly. Let the extensions of these tips meet at a point $o$. The triangles $\triangle aub$ and $\triangle a'ub'$ are referred to as the {\sf wings} of the butterfly. 
A chord $C=\overline{cc'}$ passing through $u$, whose endpoints $c$ and $c'$ lie on the tips $\overline{ab}$ and $\overline{a'b'}$, respectively, is called a {\sf variable chord}. If $u$ is the midpoint of $\overline{cc'}$, then $\overline{cc'}$ is called a {\sf balanced chord}. Given a butterfly, we are interested in locating a variable chord that maximizes one of two area measures: (i) $\operatorname{area}(\triangle cub \cup \triangle c'a'u)$ or (ii) $\operatorname{area}(\triangle auc \cup \triangle b'c'u)$. In the former case, the butterfly is called an {\sf A-butterfly}, characterized by $o\notin C^+$ (see Figure~\ref{fig:a-butterfly}); in the latter case, it is called a {\sf V-butterfly}, characterized by $o\in C^+$ (see Figure~\ref{fig:b-butterfly}). 

From Chang and Yap~\cite{chang1986polynomial}, we use the following result.

\begin{figure}[t]
  \begin{subfigure}{0.5\textwidth}
    \centering
    \includegraphics[scale=0.7]{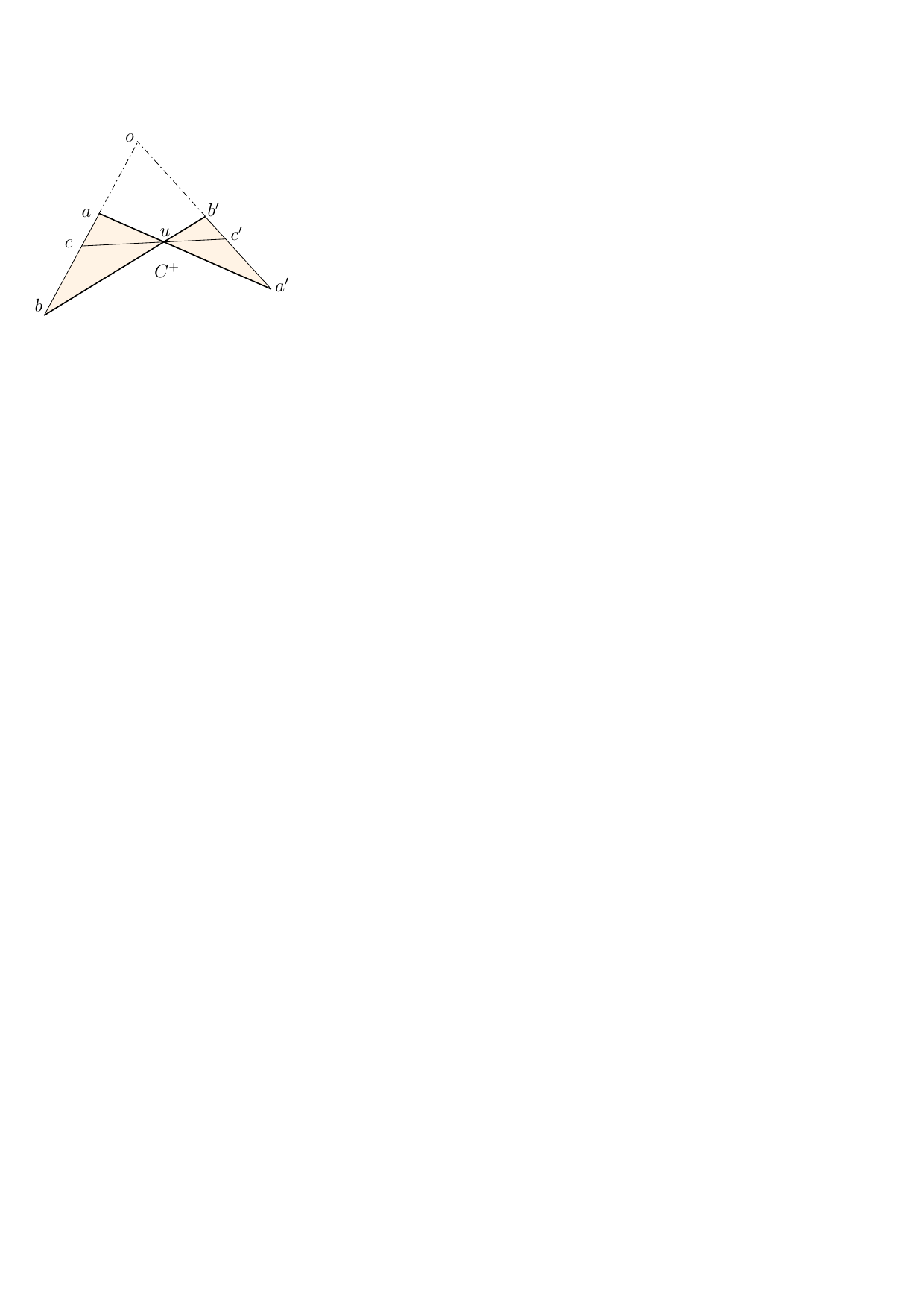}
    \caption{A-butterfly}
    \label{fig:a-butterfly}
  \end{subfigure}%
  \begin{subfigure}{0.5\textwidth}
    \centering
    \includegraphics[scale=0.7]{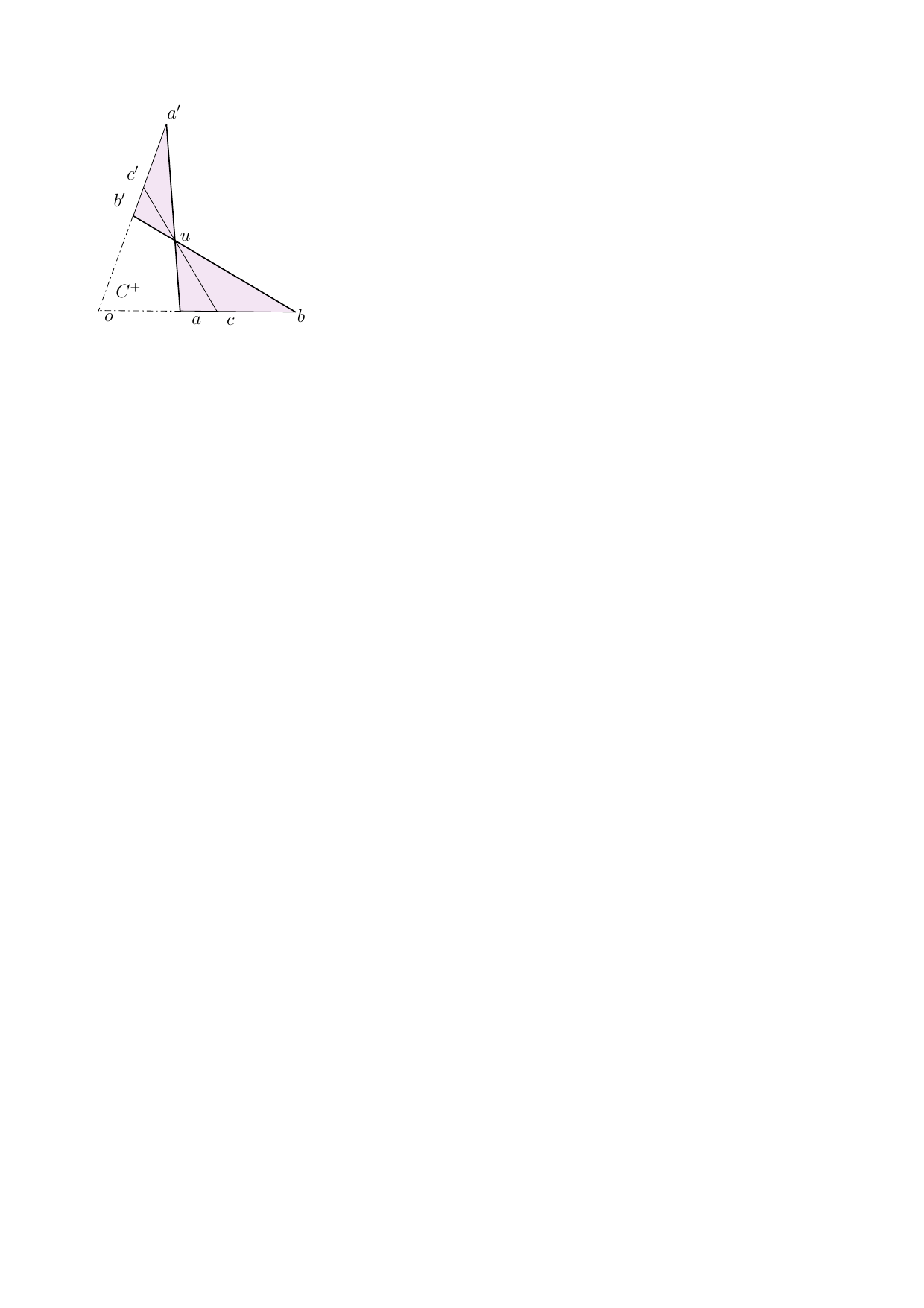}
    \caption{V-butterfly}
    \label{fig:b-butterfly}
  \end{subfigure}%
   \caption{Butterfly structures}
 \label{fig:butterfly}
\end{figure}

\begin{result}~\cite{chang1986polynomial}\label{res:butterfly}
Consider a butterfly $B$ determined by two adjacent extremal chords $\overline{aa'}$ and $\overline{bb'}$ that intersect at the point $u$, and let $C=\overline{cc'}$ be the variable chord that maximizes $C^+ \cap B$. If (i)  $B$ is an A-butterfly, then $C$ can be either a balanced chord (that is, $|cu|=|uc'|$) or an extremal chord; otherwise, (ii) $B$ is a V butterfly and $C$ is an extremal chord.
\end{result}

The structural implication of Fact~\ref{res:butterfly} for the quadrilateral problem is that, once two edges of a maximal-area quadrilateral are fixed, the third edge must lie within a feasible angular region (a butterfly) determined by their supporting lines. Within this region, the third edge, called a \emph{variable chord}, connects the two tips of the butterfly. Thus, selecting an edge of $Q'$ reduces to choosing a variable chord inside a butterfly that maximizes the area of the corresponding feasible quadrilateral.

Fact~\ref{res:butterfly} further implies that an area-maximizing variable chord has a highly restricted structure. In an A-butterfly, the maximizing chord is either an extremal chord or a balanced chord, whereas in a V-butterfly the maximizing chord must be extremal. Consequently, the search for an optimal side of a maximal quadrilateral reduces from a continuum of feasible variable chords to a finite set of extremal chords together with, in the A-butterfly case, at most one balanced chord. This discretization is fundamental to the algorithmic developments of Section~\ref{sec:quad_algo1}. In particular, it yields monotonicity properties among feasible candidates along certain lower-hull structures, from which a unimodality property of the objective function will be derived.

\begin{lemma}\label{lem:mix1}
If the left and right edges of $Q'$ are fixed, then every feasible top edge is a variable chord of an A-butterfly, and an area-maximizing top edge is either extremal or balanced.

Conversely, if the top edge is fixed and has positive (respectively negative) slope, then the right (respectively left) edge is supported by an extremal chord of a V-butterfly, and the remaining side is either extremal or balanced in the corresponding A-butterfly.
\end{lemma}

\begin{proof}
We prove the two statements separately. 

    \noindent\textit{Fix the left and right edges.} Let $\ell$ and $r$ be the fixed left and right edges of a maximal quadrilateral $Q'$. 
    Let $L_\ell$ and $L_r$ denote their supporting lines. The half-planes $L_\ell^{+}$ and $L_r^{+}$, together with the half-plane above the base, define an angular region. The portion of this region contained in $\mathfrak{T}$ is precisely the butterfly determined by $L_\ell$ and $L_r$. Any feasible top edge must join $\ell$ and $r$, remain contained in $\mathfrak{T}$, and preserve the convexity of $Q'$. Therefore every feasible top edge is a variable chord of this butterfly.

Since the supporting lines $L_\ell$ and $L_r$ intersect above the base, the resulting butterfly is an A-butterfly. Fact~\ref{res:butterfly} implies that an area-maximizing variable chord in an A-butterfly is either extremal or balanced. Hence an area-maximizing top edge is either extremal or balanced.

\noindent\textit{Fix the top edge.}
Assume that the top edge $\cal T$ has positive slope; the negative-slope case is symmetric.

The supporting line of $\cal T$ together with the base determines the feasible region in which the right edge may lie. Any feasible right edge must connect a point of $\cal T$ to a point of the base while remaining inside $\mathfrak{T}$ and preserving convexity of the quadrilateral. Consequently, the feasible right edges are precisely the variable chords of a V-butterfly. Fact~\ref{res:butterfly} implies that an area-maximizing variable chord in a V-butterfly must be extremal. Therefore the right edge is supported by an extremal chord. 

The supporting line of $\cal T$ together with the base also induces an A-butterfly whose variable chords correspond to feasible choices of the left edge. By Fact~\ref{res:butterfly}, an area-maximizing variable chord in an A-butterfly is either extremal or balanced. Therefore the left edge is either extremal or balanced.
\end{proof}


\subsection[Algorithm to compute Q*]{Algorithm to compute $Q^*$}\label{sec:quad_algo1}
Having established the structural properties of maximal quadrilaterals, we now show how they guide the search for an optimal solution. By Section~\ref{sec:quad_prop}, the defining edges of every maximal quadrilateral are supported by extremal chords and, in certain A-butterfly configurations, by balanced chords. Thus, the problem reduces to examining finitely many combinatorially defined configurations rather than arbitrary chord placements. We enumerate all maximal quadrilaterals $Q'$ satisfying the structural conditions established by Lemmas~\ref{lem:base_on_base}--\ref{lem:midpoint} and Observation~\ref{obs:touch}, and return one of maximum area.

We present the case in which the top edge ${\cal T}$ has positive slope; the non-positive case is symmetric. If ${\cal T}$ has positive slope, then by Lemma~\ref{lem:twopoint} the right edge is supported by an extremal chord. Fix such a right edge. Together with the base, it determines a V-butterfly configuration. The left and top edges are then obtained as an optimal pair of chords within this structure. By Lemma~2 of Chang and Yap~\cite{chang1986polynomial}, at least one of the two chords defining an optimal solution within this V-butterfly configuration is extremal. Furthermore, by Lemma~\ref{lem:midpoint}, the left edge is either extremal or balanced. Consequently, for an exhaustive search we need only to consider two possibilities for the left edge: it is supported either by an extremal chord or by a balanced chord. We first consider the former case; the latter is handled separately.

\subsubsection{Left edge is supported by an extremal chord} 
We first bound the number of extremal chords that can serve as left or right edges. It is shown in~\cite{DasDM21} that for any point $p$ on the terrain, there exists at most one vertex $v\in V(\mathfrak{T})$ lying to the left (resp.\ right) and below $p$ such that the line through $p$ and $v$ intersects the base $\Pi$ at a point $q$, where the segment $\overline{vq}$ entirely contained in the terrain. Together with
Lemmas~\ref{lem:twopoint}--\ref{lem:midpoint} and Corollary~\ref{cor:slope}, this implies that there are only $O(n)$ extremal chords that can serve as left or right edges of a maximal-area quadrilateral. Moreover, all such chords are computed in $O(n)$ time by constructing shortest-path trees rooted at the endpoints of the base~\cite{DasDM21}. We refer to these chords as \emph{candidate edges}. By Corollary~\ref{cor:slope}, candidate edges of positive slope form the set $C_L$ of left-edge candidates, while those of negative slope form the set $C_R$ of right-edge candidates. This yields the following result.

\begin{lemma}\label{lem:cardinality}
The sets $C_L$ and $C_R$ each have size $O(n)$, and can computed in $O(n)$ time.
\end{lemma}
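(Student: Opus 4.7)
The plan is to derive both the cardinality bound and the $O(n)$ running time from the shortest-path-tree characterization of extremal chords inside a terrain given in~\cite{DasDM21}, which the excerpt has already quoted. First I would pin down the exact shape of a candidate edge. Combining Lemma~\ref{lem:slope} and Lemma~\ref{lem:twopoint}, an element of $C_L$ (resp. $C_R$) is a segment that lies entirely inside $\mathfrak{T}$, has positive (resp. negative) slope, has one endpoint on $\Pi$, and passes through at least two vertices of $\mathfrak{T}$. The hypothesis from~\cite{DasDM21} asserts that for every point $p\in\mathfrak{T}$ there is at most one terrain vertex $v$ lying strictly to the left of and below $p$ such that the segment obtained by extending $\overline{vp}$ down to $\Pi$ stays inside $\mathfrak{T}$, and symmetrically for the right side.

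For the cardinality bound I would apply this observation with $p$ ranging over the vertices of $\mathfrak{T}$. Any candidate edge in $C_L$ passes through two terrain vertices $v$ and $p$ with $v$ to the left of and below $p$, so the edge is uniquely determined by its \emph{upper} terrain vertex $p$. Hence $|C_L|\le n$, and the mirror argument gives $|C_R|\le n$. For the construction, I would compute the shortest path tree of $\mathfrak{T}$ rooted at $v_1$ (for $C_L$) and at $v_n$ (for $C_R$), each in $O(n)$ time by the standard linear-time algorithm for simple polygons. Walking through each tree, I would inspect for every vertex $p$ its parent edge and test in $O(1)$ whether the downward extension of the line it supports meets $\Pi$ while remaining in $\mathfrak{T}$; if so, the resulting segment is added to $C_L$ or $C_R$ according to its slope. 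The total work is clearly $O(n)$.

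The part I expect to be technically delicate is justifying that the enumeration above is exhaustive, i.e.\ that every candidate extremal chord is produced as a parent edge in one of the two trees. The worry is that an extremal chord might be anchored at $p$ and some non-parent ancestor $v$ in the shortest path tree, in which case it would be missed. To rule this out I would use the taut-string interpretation of shortest paths together with the $x$-monotonicity of the upper chain of $\mathfrak{T}$: any line through $p$ that descends inside $\mathfrak{T}$ and reaches $\Pi$ pivots about the first terrain vertex encountered along it below $p$, and monotonicity forces that pivot to be precisely $p$'s parent in the corresponding shortest path tree. Once this identification is established, the uniqueness claim of~\cite{DasDM21} translates directly into the desired $O(n)$ bound on $|C_L|$ and $|C_R|$, and the tree traversal yields the stated $O(n)$ construction time. \hfill$\qed$
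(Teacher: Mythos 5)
Your proposal follows essentially the same route as the paper: the $O(n)$ cardinality bound comes from the fact of~\cite{DasDM21} that each terrain vertex admits at most one such pivot vertex below it on each side (so each candidate chord is charged to its upper vertex), and the $O(n)$ construction comes from the shortest path trees rooted at the two base vertices. The one point you flag as delicate --- that every candidate extremal chord is realized as a (parent) edge of the appropriate shortest path tree --- is exactly the step the paper delegates wholesale to the citation, so your additional taut-string justification makes the argument, if anything, slightly more complete than the paper's own.
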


To construct a maximal quadrilateral $Q'$, we fix a left candidate edge
$\ell \in C_L$ and a right candidate edge $r \in C_R$, where $r$ lies to
the right of $\ell$. The pair $(\ell,r)$ determines an A-butterfly
configuration in which the top edge ${\cal T}$ varies. Thus, the problem
reduces to determining an optimal choice of ${\cal T}$. By
Fact~\ref{res:butterfly}, an area-maximizing top edge is either an
extremal chord or a balanced chord of the corresponding A-butterfly. We
have the following observation.

\begin{observation}\label{obs:a-balanced-opt}
If the A-butterfly determined by $(\ell,r)$ admits a balanced chord, then by Fact~\ref{res:uniquepoint_thm} that chord is unique. Furthermore, by Fact~\ref{res:butterfly}, if an optimal top edge ${\cal T}$ is balanced, then it must coincide with this unique chord.
\end{observation}

We next enumerate all extremal candidates for the top edge ${\cal T}$ associated with a fixed pair $(\ell,r)$. Recall that ${\cal T}$ can be either extremal or balanced. Consider the A-shaped region defined by $\ell$ and $r$. Let $\ell$ and $r$ intersect the terrain boundary $\partial \mathfrak{T}$ at points $p \in e_i (=\overline{v_i v_{i+1}})$ and $q \in e_j(=\overline{v_j v_{j+1}})$, respectively, with $i \le j$, and let $V_c=\{v_{i+1},\ldots,v_j\}$ be the terrain vertices with $x$-coordinates between $p$ and $q$.

If $i=j$, then no terrain vertex lies strictly between $p$ and $q$. Consequently, the segment $\overline{pq}$ is the unique feasible choice for the top edge ${\cal T}$.

Assume now that $i<j$. Any feasible top edge ${\cal T}$ must have a supporting line that lies below every vertex in $V_c$ (see Figure~\ref{fig:butterfly_monotonic}); otherwise the resulting quadrilateral would either fail to be convex or would not be contained in $\mathfrak{T}$. Consequently, every feasible top edge must be supported by an edge of the lower convex hull ${\cal LH}$ of $V_c\cup\{p,q\}$. Hence, it suffices to consider only the edges of ${\cal LH}$. Indeed, any chord joining two non-consecutive vertices of ${\cal LH}$ lies strictly above an edge of ${\cal LH}$ and therefore cannot define a supporting line of a maximal feasible quadrilateral. Hence such chords can be discarded. Each edge of ${\cal LH}$ corresponds to an extremal chord that is a potential candidate for $\cal T$.

Finally, we discard any chord whose supporting line intersects the base $\Pi$ between the intersections of $\ell$ and $r$ with $\Pi$, since its supporting line fails to define a feasible A-butterfly with the fixed pair $(\ell,r)$. The remaining extremal chords correspond precisely to the feasible A-butterflies determined by $(\ell,r)$ and therefore constitute the complete set of candidate top edges for constructing maximal convex quadrilaterals.

\begin{figure}[t]
    \centering
    \includegraphics[width=0.7\linewidth]{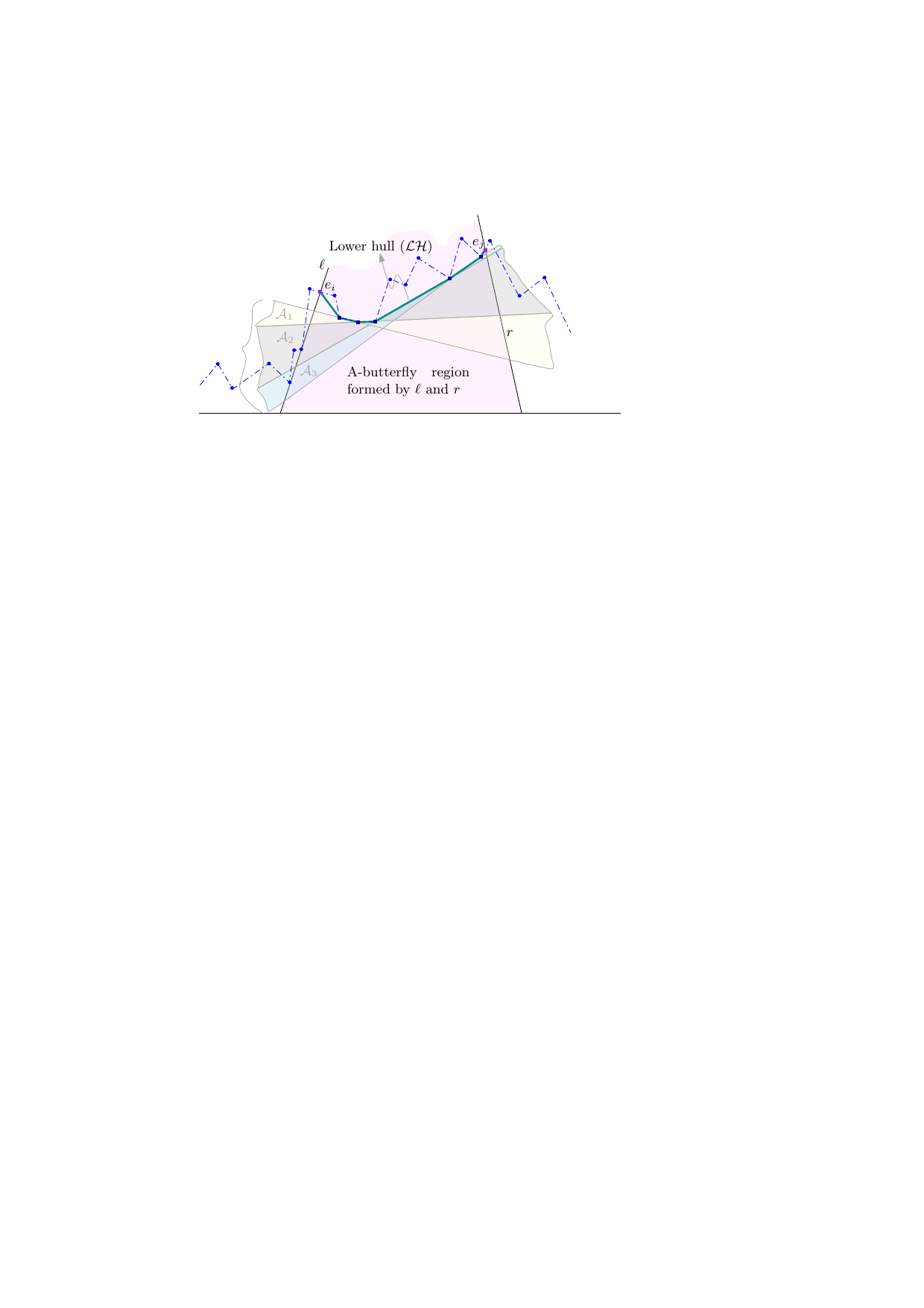}
    \caption{A typical A-butterfly scenario: ${\cal A}_1, {\cal A}_2, {\cal A}_3$ are three A-butterflies. }
    \label{fig:butterfly_monotonic}
\end{figure}

The candidate top edges obtained from ${\cal LH}$ are naturally ordered from left to right. The following lemma shows that the areas of the corresponding maximal convex quadrilaterals form a unimodal sequence (see Figure~\ref{fig:butterfly_monotonic}).


\begin{lemma}\label{lem:unimodal}
The areas of the feasible maximal convex quadrilaterals corresponding to the candidate top edges induced by ${\cal LH}$ form a unimodal sequence in left-to-right order.
\end{lemma}
\begin{proof}
Fix a pair of candidate edges $\ell$ and $r$, and let $C_1,C_2,\ldots,C_\mu$ be the extremal chords corresponding to the edges of the lower hull ${\cal LH}$ of the terrain vertices induced by $(\ell,r)$, ordered from left to right. Any two consecutive chords $C_\nu$ and $C_{\nu+1}$, together with $\ell$ and $r$, define an A-butterfly ${\cal A}_\nu$. Consequently, the chords induce an overlapping sequence of A-butterflies.

For each $1\leq \nu\leq \mu$, let $Q_\nu$ denote the maximal convex quadrilateral whose top edge is $C_\nu$, and let $\tau_\nu=\mathrm{area}(Q_\nu)$. We show that the sequence $\{\tau_\nu\}_{\nu=1}^{\mu}$ is unimodal.

Let $u_\nu$ be the intersection point of the supporting lines of $C_\nu$ and $C_{\nu+1}$. For $i\in\{\nu,\nu+1\}$, let $\ell_{C_i}$ and $r_{C_i}$ denote the intersections of $C_i$ with $\ell$ and $r$, respectively. We refer to the two wings $\triangle \ell_{C_\nu}u_\nu\ell_{C_{\nu+1}}$ and $\triangle r_{C_\nu}u_\nu r_{C_{\nu+1}}$ of ${\cal A}_\nu$ as its left and right triangular components, respectively.

Observe that $Q_\nu$ and $Q_{\nu+1}$ share a common region. Passing from $Q_\nu$ to $Q_{\nu+1}$ removes the left triangular component and adds the right triangular component. Therefore,
\[
\tau_{\nu+1}-\tau_\nu=
\mathrm{area}(\triangle r_{C_\nu}u_\nu r_{C_{\nu+1}})
-\mathrm{area}(\triangle \ell_{C_\nu}u_\nu\ell_{C_{\nu+1}}).
\]
Hence, the sign of $\tau_{\nu+1}-\tau_\nu$ is determined entirely by the relative areas of the right and left triangular components of ${\cal A}_\nu$.

Since $C_1,\ldots,C_\mu$ correspond to consecutive edges of the lower hull ${\cal LH}$, their supporting lines have strictly increasing slopes. Consequently, for each pair of consecutive chords $C_\nu$ and $C_{\nu+1}$, every supporting line obtained by continuously rotating the supporting line of $C_\nu$ to that of $C_{\nu+1}$ defines a variable chord of the corresponding A-butterfly ${\cal A}_\nu$. During this rotation, the intersection points with the fixed edges $\ell$ and $r$ move continuously along $\ell$ and $r$, respectively.

Let us define, $f_\nu(C)=|\overline{\ell_Cu_\nu}|-|\overline{r_Cu_\nu}|$,
where $C$ is a variable chord of ${\cal A}_\nu$. Since the endpoints of $C$ move continuously along $\ell$ and $r$, the function $f_\nu(C)$ varies continuously throughout ${\cal A}_\nu$. Furthermore,
$f_\nu(C)=0$ if and only if $C$ is balanced.

Suppose first that $f_\nu(C_\nu)<0$ and $f_\nu(C_{\nu+1})<0$. Then
\[
|\overline{\ell_{C_\nu}u_\nu}|
<
|\overline{r_{C_\nu}u_\nu}|
\qquad\text{and}\qquad
|\overline{\ell_{C_{\nu+1}}u_\nu}|
<
|\overline{r_{C_{\nu+1}}u_\nu}|.
\]
Since the left and right triangular components of ${\cal A}_\nu$ share the same apex $u_\nu$ and subtend the same angle at $u_\nu$, their areas are proportional to the products of the corresponding side
lengths. Hence, $\mathrm{area}(\triangle \ell_{C_\nu}u_\nu\ell_{C_{\nu+1}}) < \mathrm{area}(\triangle r_{C_\nu}u_\nu r_{C_{\nu+1}})$, which implies $\tau_{\nu+1}>\tau_\nu$.

Similarly, if $f_\nu(C_\nu)>0$ and $f_\nu(C_{\nu+1})>0$, then
$\mathrm{area}(\triangle \ell_{C_\nu}u_\nu\ell_{C_{\nu+1}})>\mathrm{area}(\triangle r_{C_\nu}u_\nu r_{C_{\nu+1}})$, and therefore $\tau_{\nu+1}<\tau_\nu$.

The only remaining case is when $f_\nu(C_\nu)$ and $f_\nu(C_{\nu+1})$ have opposite signs. Then, by continuity, there exists an intermediate variable chord $C$ in ${\cal A}_\nu$ such that $f_\nu(C)=0$. By
Observation~\ref{obs:a-balanced-opt}, this chord is the unique balanced chord of ${\cal A}_\nu$ and yields the maximum-area quadrilateral within the butterfly. In this case, ${\cal A}_\nu$ is the unique transition butterfly in which the sign of $\tau_{\nu+1}-\tau_\nu$ may change. 

Now consider two consecutive butterflies ${\cal A}_\nu$ and ${\cal A}_{\nu+1}$. Since they share the boundary chord $C_{\nu+1}$, the sign at the right boundary of ${\cal A}_\nu$ coincides with the sign at the left boundary of ${\cal A}_{\nu+1}$. Therefore, as we traverse the overlapping sequence of butterflies from left to right, the sign of $\tau_{\nu+1}-\tau_\nu$ can change only when a balanced chord is
encountered.

Since the butterflies ${\cal A}_1,\ldots,{\cal A}_{\mu-1}$ form an overlapping sequence of A-butterflies, such a sequence of A-butterflies contains at most one balanced chord (Lemma~3 of Chang and Yap~\cite{chang1986polynomial}). Consequently, the sign of $\tau_{\nu+1}-\tau_\nu$ changes at most once throughout the sequence. Therefore, $\{\tau_\nu\}_{\nu=1}^{\mu}$ first increases and then decreases, or is monotone throughout. In either case, the sequence is unimodal.

Hence the areas of the maximal convex quadrilaterals corresponding to the candidate top edges induced by ${\cal LH}$ form a unimodal sequence.
\end{proof}

We now exploit Lemma~\ref{lem:unimodal} to determine, for a fixed pair
of candidate edges $\ell$ and $r$, the optimal top edge, as the areas
of the maximal convex quadrilaterals induced by the candidate chords of
${\cal LH}$ form a unimodal sequence in left-to-right order. Hence, for a fixed pair $(\ell,r)$, an optimal top edge can be found in
$O(\log h)$ time by binary search on this unimodal sequence, where
$h=|{\cal LH}|$.

Next, fix a right candidate edge $r_Q\in C_R$ and process all left
candidate edges $\ell\in C_L$ lying to its left. For each pair
$(\ell,r_Q)$, the relevant hull is the lower convex hull of the terrain
vertices between the intersection points of $\ell$ and $r_Q$ with the
terrain boundary. Since the terrain is $x$-monotone, as $\ell$ moves
from left to right while $r_Q$ remains fixed, the feasible vertex sets
form nested subsequences of the terrain vertices and differ only by
deletions from the left. Consequently, the corresponding lower hulls
can be maintained incrementally using a dynamic data structure for the
lower hull of a monotone chain, supporting updates in amortized
constant time~\cite{brewer2025dynamic,BrodalJ02}.

For each fixed $r_Q$, there are $O(n)$ candidate left edges, and for
each pair $(\ell,r_Q)$ the optimal top edge can be found in
$O(\log n)$ time. Since $|C_R|=O(n)$ by
Lemma~\ref{lem:cardinality}, the total running time is
$O(n^2\log n)$. We thus obtain the following result.

\begin{lemma}\label{lem:extremal_left_basic}
Given a $1.5$D terrain with $n$ vertices, a largest convex
quadrilateral whose top edge has positive slope and whose left and
right edges are supported by extremal chords is computed in
$O(n^2\log n)$ time.
\end{lemma}

We eliminate the logarithmic factor in
Lemma~\ref{lem:extremal_left_basic} using the following monotonicity
property. Fix a right edge $r_Q$, and let
$\ell_1,\ldots,\ell_t$ be the candidate left edges lying to its left,
ordered from left to right. For each pair $(\ell_i,r_Q)$, let
$k(i)$ denote the index of the candidate top edge maximizing the area
among the edges of ${\cal LH}(\ell_i,r_Q)$.

\begin{lemma}\label{lem:nondecreasing}
The sequence $k(1),k(2),\ldots,k(t)$ is monotonically non-decreasing.
\end{lemma}

\begin{proof}
Fix a right candidate edge $r_Q$. For each $i$, let $C_1^{(i)},C_2^{(i)},\ldots,C_{h_i}^{(i)}$ be the candidate top edges induced by $LH(\ell_i,r_Q)$, ordered from left to right. By Lemma~\ref{lem:unimodal}, the corresponding areas $\tau_1^{(i)},\tau_2^{(i)},\ldots,\tau_{h_i}^{(i)}$ form a unimodal sequence. Let $k(i)$ denote the leftmost index attaining the maximum.

For each consecutive pair
$C_\nu^{(i)},C_{\nu+1}^{(i)}$, let
$A_\nu^{(i)}$ denote the corresponding A-butterfly, and define
\[
\Delta_\nu^{(i)}
=
\tau_{\nu+1}^{(i)}
-
\tau_\nu^{(i)}.
\]
By the proof of Lemma~\ref{lem:unimodal}, $\Delta_\nu^{(i)}=\operatorname{area}(R_\nu^{(i)})-
\operatorname{area}(L_\nu^{(i)})$, where $L_\nu^{(i)}$ and $R_\nu^{(i)}$
denote the left and right triangular components of
$A_\nu^{(i)}$, respectively.

Now compare two consecutive left edges
$\ell_i$ and $\ell_{i+1}$.
Since the terrain is $x$-monotone and the left edges are processed from left to right, the feasible vertex set for
$(\ell_{i+1},r_Q)$
is obtained from that of
$(\ell_i,r_Q)$
by deleting a prefix of terrain vertices. Consequently, every surviving candidate top edge preserves its left-to-right order along the lower hull.

Consider any surviving butterfly. Since the right edge $r_Q$ remains fixed, the right triangular component remains unchanged. On the other hand, moving the left boundary from $\ell_i$ to $\ell_{i+1}$ shifts the left intersections of the two supporting lines monotonically rightward. Therefore, the corresponding left triangular component cannot increase in area. Hence, for every surviving index $\nu$, $\Delta_\nu^{(i+1)}
\ge \Delta_\nu^{(i)}$.

By Lemma~\ref{lem:unimodal}, the maximizing index is characterized by the transition point at which the sequence changes from increasing to decreasing; equivalently,
\[
\Delta_\nu^{(i)}\ge 0
\quad\text{for}\quad
\nu<k(i), \qquad \text{and}\qquad
\Delta_\nu^{(i)}<0
\quad\text{for}\quad
\nu\ge k(i).
\]
Since the maximizing index is characterized by the transition point at which $\tau^{(i)}_{\nu+1}-\tau^{(i)}_\nu$ changes from nonnegative to negative differences in the unimodal sequence, the monotonicity of the surviving differences implies that this transition cannot occur earlier in the next iteration. Indeed, every surviving difference can only increase, while hull updates merely delete a prefix of candidate top edges and never introduce new candidates to the left of the surviving order. Therefore, the transition point at which the sequence changes from increasing to decreasing cannot move to the left. Hence, $k(i+1)\ge k(i)$, and the sequence $k(1),k(2),\ldots,k(t)$ is monotonically non-decreasing.
\end{proof}

By Lemma~\ref{lem:nondecreasing}, for a fixed right edge $r_Q$, the maximizing hull index can be maintained using a single forward pointer while processing the left candidate edges in order. Since the pointer never moves backward, it advances at most $O(n)$ times over the entire sequence.

For a fixed $r_Q$, the initial lower hull can be constructed in linear time by scanning the terrain vertices in left-to-right order using Andrew's monotone chain algorithm \cite{Andrew79}. As the left
candidate edges are processed from left to right, the feasible vertex set changes only by deletions from the left, and the lower hull can therefore be maintained in amortized constant time per update
\cite{brewer2025dynamic,BrodalJ02}. Hence, the total time spent for all left edges corresponding to a fixed $r_Q$ is $O(n)$.


As $|C_R|=O(n)$ by Lemma~\ref{lem:cardinality}, iterating over all candidate right edges yields the following result.

\begin{lemma}\label{lem:extremal_left}
Given a $1.5$D terrain with $n$ vertices, a largest convex quadrilateral whose top edge has positive slope and whose left and right edges are supported by extremal chords is computed in $O(n^2)$ time.
\end{lemma}

\begin{remark}\label{rem:negative}

Although the analysis above focuses on top edges of positive slope, the candidate top-edge enumeration is independent of the slope sign. For a fixed pair of candidate left and right edges $(\ell,r)$, all feasible candidate top edges arise from the extremal chords induced by the lower hull (${\cal LH}$) of the corresponding feasible terrain vertices (Lemma~\ref{lem:unimodal}). In particular, the enumeration retains every feasible extremal chord defining a valid butterfly configuration, regardless of the sign of its slope. Moreover, by Lemma~\ref{lem:mix1}, every maximal convex quadrilateral is associated with a butterfly configuration supported by the pair $(\ell,r)$. Therefore, when the symmetric case of a negative-slope top edge is considered, the same candidate extremal chords induce the corresponding symmetric configurations. Consequently, the enumeration process described above implicitly accounts for maximal convex quadrilaterals with both positive- and negative-slope top edges.

\end{remark}

\subsubsection{Left Edge is a Balanced Chord}

We now consider the second case for computing a largest convex quadrilateral in a $1.5$D terrain: the situation where, for a fixed right candidate edge $r_Q$ and a top edge of positive slope, the left edge is a \emph{balanced chord}. Fix a right edge $r_Q \in C_R$. Let $L(r_Q)=\{v\in V(\mathfrak{T})\mid x(v)<x(r_Q\cap \Pi)\}$ denote the set of terrain vertices lying strictly to the left of $r_Q$. Our goal is to identify those vertices that can induce a feasible balanced left edge participating in a maximal convex quadrilateral with right edge $r_Q$.

Consider a vertex $v\in L(r_Q)$ and suppose that it induces a balanced left edge, denoted by $\ell_v(r_Q)$. For the resulting configuration to define a feasible maximal quadrilateral, a certain \textit{emptiness condition} must hold. Let $b_v(\ell)$ and $b_v(r_Q)$ be the intersections of $\ell_v(r_Q)$ and $r_Q$, respectively, with $\Pi$. Similarly, let $t_v(\ell)$ and $t_v(r_Q)$ be the intersections of $\ell_v(r_Q)$ and $r_Q$, respectively, with the line $y=2y(v)$. Define $\psi_v$ to be the trapezoid with vertices $b_v(\ell),t_v(\ell),t_v(r_Q), b_v(r_Q)$. Then $\psi_v$ must be empty of terrain vertices. The motivation behind this condition is as follows. As vertices are processed from left to right, the trapezoid associated with a vertex extends upward to the line $y=2y(v)$ and rightward toward the fixed edge $r_Q$. Due to the $x$-monotonicity of the terrain, a later vertex $q$ may lie to the right of an earlier vertex $p$ while remaining at a comparable or smaller height. In such cases, $q$ may lie inside the trapezoid $\psi_p$, violating the emptiness condition required for $p$. Consequently, many vertices of $L(r_Q)$ become obstructed, and only a subset can induce feasible balanced left edges. In particular, we have the following.

\begin{lemma}\label{lem:empti_trap}
If a vertex $v\in L(r_Q)$ induces a feasible balanced left edge for a maximal quadrilateral with right edge $r_Q$, then the interior of $\psi_v$ contains no terrain vertex.
\end{lemma}

\begin{proof}
Suppose that $v$ induces a feasible balanced left edge $\ell_v(r_Q)$, but the interior of $\psi_v$ contains a terrain vertex $w$. Since $\ell_v(r_Q)$ is balanced, the top-left vertex of the corresponding maximal quadrilateral lies on the line $y=2v_y$. Consequently, the top edge meets the line $y=2v_y$ at a point between $t_v(\ell)$ and $t_v(r_Q)$, and therefore the quadrilateral necessarily contains $\psi_v$. Since $w\in \operatorname{int}(\psi_v)$, the vertex $w$ lies strictly inside the quadrilateral, contradicting feasibility. Hence the interior of $\psi_v$ must be empty.
\end{proof}

Motivated by Lemma~\ref{lem:empti_trap}, define $L'(r_Q)=\{v\in L(r_Q)\mid \operatorname{int}(\psi_v)=\emptyset\}\subseteq L(r_Q)$, that is, the subset of vertices satisfying the necessary emptiness condition for inducing a feasible balanced left edge with right edge $r_Q$. We next establish an important structural property of the vertices in $L'(r_Q)$.

\begin{lemma}\label{lem:xandysorted}
The vertices of $L'(r_Q)$, when ordered from left to right according to their $x$-coordinates, have strictly increasing $y$-coordinates.
\end{lemma}

\begin{proof}
Let $p,q\in L'(r_Q)$ such that $p_x<q_x$. We show that $p_y<q_y$. Assume, for contradiction, that $p_y\ge q_y$.

Since $p\in L'(r_Q)$, the trapezoid $\psi_p$ contains no terrain vertex in its interior. By definition, $\psi_p$ is bounded below by $\Pi$, above by the line $y=2p_y$, on the left by the balanced edge
$\ell_p(r_Q)$, and on the right by $r_Q$.

As $q\in L'(r_Q)$, the vertex $q$ lies to the left of $r_Q$. Furthermore, since $p_y\ge q_y$, we have $q_y\le p_y<2p_y$, and hence $q$ lies strictly below the top boundary of $\psi_p$. Moreover, $\ell_p(r_Q)$ has positive slope and passes through $p$. Since $q_x>p_x$ and $q_y\le p_y$, the vertex $q$ lies strictly below the supporting line of $\ell_p(r_Q)$. Therefore, $q\in \operatorname{int}(\psi_p)$, contradicting the fact that $p\in L'(r_Q)$ and hence $\psi_p$ contains no terrain vertex in its interior.

Hence $p_y<q_y$. Therefore, the vertices of $L'(r_Q)$ have strictly increasing $y$-coordinates when ordered from left to right.
\end{proof}

\begin{figure}[t]
    \centering
    \includegraphics[width=0.8\linewidth]{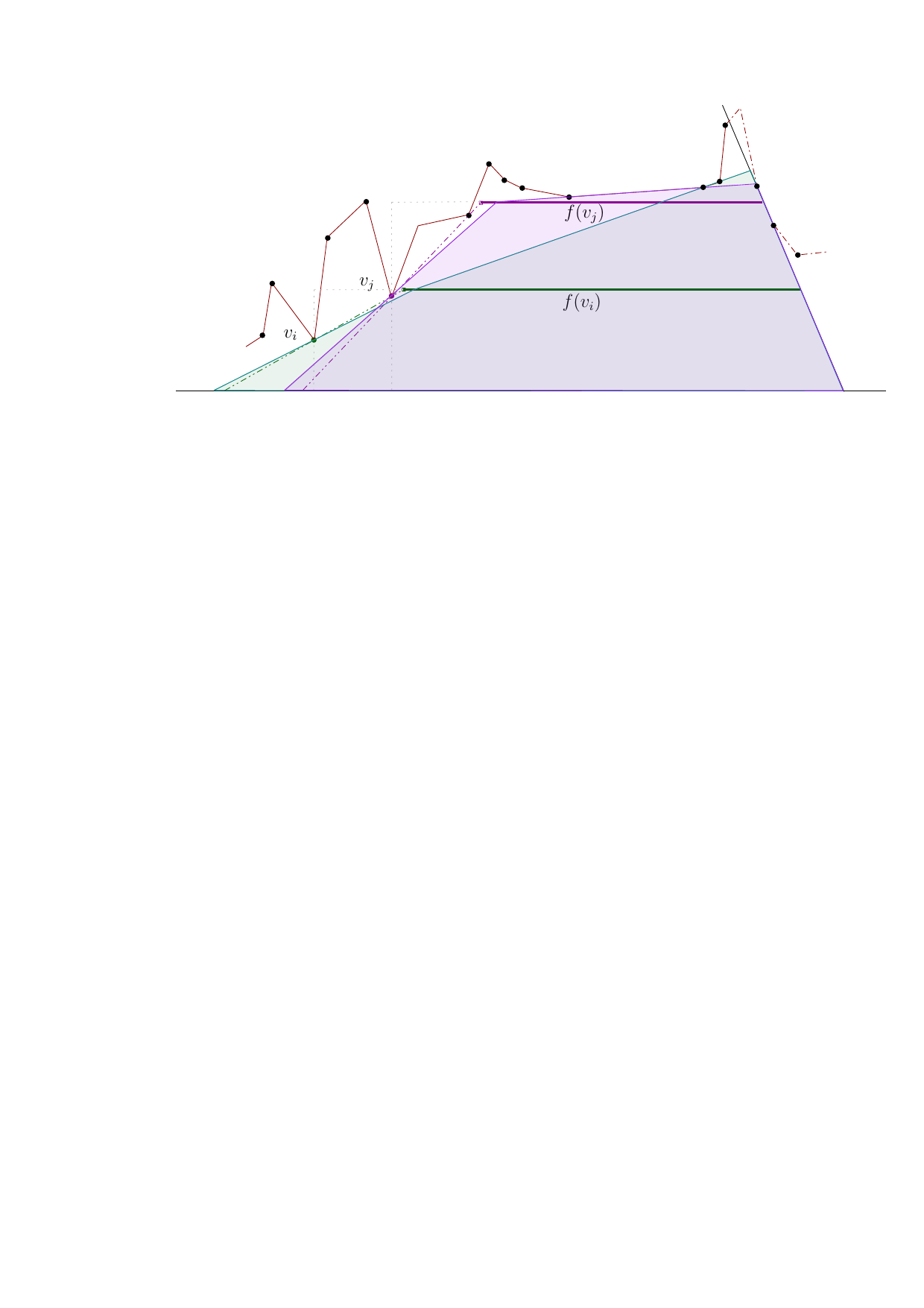}
    \caption{Feasible regions for different vertices
    $v_i,v_j\in L'(r_Q)$ corresponding to a fixed right edge and the
    associated maximal convex quadrilaterals.}
    \label{fig:balanced}
\end{figure}

The emptiness condition described above (Lemma~\ref{lem:empti_trap}) determines, for each $v\in L'(r_Q)$, a region within which the top edge of a maximal convex quadrilateral may intersect the balanced left edge $\ell_v(r_Q)$. For each such vertex $v$, define the feasible segment $f(v)$ (see Figure~\ref{fig:balanced}) as
\[
f(v)=
\{(x,2v_y)\in \mathfrak{T}
\mid
x(t_v(\ell))
\le x \le
x(t_v(r_Q))
\}.
\]
Geometrically, $f(v)$ is the maximal segment on the line $y=2v_y$ contained in the terrain, extending from the intersection with the extended balanced edge $\ell_v(r_Q)$ to the intersection with the fixed right edge $r_Q$. By construction, any feasible top edge associated with $v$ must intersect the line $y=2v_y$ within $f(v)$.

To compute $f(v)$, we use the left shortest-path
tree associated with $L(r_Q)$~\cite{CabelloDDM25}. As shown
in~\cite{CabelloDDM25}, the extremal visibility boundary of $v$ at
height $y=2v_y$ is determined by a terrain vertex $w$ identified from
the shortest-path tree. If there exists such a vertex satisfying
$w_y\le 2v_y$, we choose the rightmost one; otherwise, we choose the
leftmost such vertex above the level $y=2v_y$. This choice identifies
the extremal terrain vertex that defines the visibility boundary of
$v$ at height $y=2v_y$. The line through $v$ and $w$ intersects the
horizontal line $y=2v_y$ at a point that defines the left endpoint of
$f(v)$. The segment is then extended rightward along $y=2v_y$ until it
intersects the fixed right edge $r_Q$.

Now consider the vertices of $L'(r_Q)$ ordered from left to right. By Lemma~\ref{lem:xandysorted}, their $y$-coordinates increase strictly. Consequently, the corresponding feasible segments are vertically
ordered, yielding the following observation.

\begin{observation}\label{obs:feasible_ordering}
For $1\le i<j\le |L'(r_Q)|$, the segment $f(v_i)$ lies strictly below
$f(v_j)$.
\end{observation}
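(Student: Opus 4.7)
The plan is to derive Observation~\ref{obs:feasible_ordering} as a direct consequence of Observation~\ref{obs:xandysorted} combined with the explicit construction of $f(v)$ given in the paragraph just before the statement.

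First I would unpack the definition: for any $v \in L'(r_Q)$, the feasible region $f(v)$ is a horizontal segment supported by the line $y = 2y(v)$. Its left endpoint is the intersection of that line with $\overline{vw}$, where $w$ is the rightmost vertex of $L(r_Q)$ with $y(w) \le 2y(v)$, and its right endpoint is the intersection of the same line with $r_Q$. In particular, every point of $f(v)$ has $y$-coordinate exactly $2y(v)$.

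Next, for $1 \le i < j \le |L'(r_Q)|$, Observation~\ref{obs:xandysorted} yields $y(v_i) < y(v_j)$, and therefore $2y(v_i) < 2y(v_j)$. Hence every point of $f(v_i)$ has strictly smaller $y$-coordinate than every point of $f(v_j)$, which is exactly the sense in which $f(v_i)$ lies below $f(v_j)$. Because both segments abut $r_Q$ on their right, and $r_Q$ has negative slope (it supports the right edge of a maximal quadrilateral, by Lemma~\ref{lem:slope}), their $x$-projections do overlap non-trivially, with $f(v_j)$ meeting $r_Q$ at a point strictly to the left of where $f(v_i)$ meets it. Thus ``below'' is meaningful both globally (different $y$-levels) and pointwise on the common $x$-range.

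I do not expect any serious obstacle here; the only thing worth double-checking is that both segments are actually non-empty, i.e.\ that the horizontal line $y = 2y(v_j)$ is still below the top endpoint of $r_Q$. But this is already encoded in the membership $v_j \in L'(r_Q)$, since such a $v_j$ is assumed to define, via its balanced chord, a valid convex quadrilateral with right edge $r_Q$. So the proof reduces to citing Observation~\ref{obs:xandysorted} and the definition of $f(v)$, and I would write it as a single short paragraph.
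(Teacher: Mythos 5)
Your argument is correct and matches the paper's intent: the paper gives no explicit proof, merely asserting the observation follows from Observation~\ref{obs:xandysorted}, and your reasoning --- each $f(v)$ is a horizontal segment at height $2y(v)$, and $y(v_i)<y(v_j)$ for $i<j$ by that observation --- is exactly the intended one-line justification. The additional remarks about the segments abutting the negatively sloped $r_Q$ are a harmless (and correct) elaboration.
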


For each vertex $v_i\in L'(r_Q)$, let $\Delta(v_i)$ denote the triangle
bounded by the feasible segment $f(v_i)$, the extended balanced edge
$\ell_{v_i}(r_Q)$, and the fixed right edge $r_Q$. We denote by
${\cal LH}(f(v_i))$ the lower convex hull of the terrain vertices lying
inside $\Delta(v_i)$.

For each edge $e\in {\cal LH}(f(v_i))$, let $\ell$ denote its
supporting line. Extending $\ell$ to the left and right, we determine
its intersections with $f(v_i)$ and the fixed right edge $r_Q$,
respectively. If both intersections exist and the resulting
quadrilateral is convex (in particular, if the internal angle at the top-left vertex is less than $180^\circ$), then $\ell$ defines a feasible top edge. We compute the area
of the corresponding quadrilateral and retain the maximum over all such
candidates. If no edge of ${\cal LH}(f(v_i))$ yields a valid
intersection with $f(v_i)$, then no feasible convex quadrilateral
exists with $\ell_{v_i}(r_Q)$ as the balanced left edge.

The edges of ${\cal LH}(f(v_i))$ are processed from right to left.
Since ${\cal LH}(f(v_i))$ is convex, the intersections of the
supporting lines of its edges with the horizontal line containing
$f(v_i)$ move monotonically leftward during this traversal. Therefore,
once the supporting line of an edge, say $e(v_i) \in {\cal LH}(f(v_i))$, intersects this horizontal line
strictly to the left of the left endpoint of $f(v_i)$, every remaining
edge further to the left of $e(v_i)$ also intersects outside $f(v_i)$. Hence those edges are infeasible to produce a valid convex quadrilateral and the
scan may terminate immediately.

\begin{lemma}\label{lem:feasible-segment_disjoint}
  By Observation~\ref{obs:feasible_ordering}, the segments
$f(v_1),f(v_2),\ldots$ are vertically ordered from bottom to top. We
claim that the sets of feasible top edges associated with consecutive
segments are disjoint.  
\end{lemma}
\begin{proof}
 By Observation~\ref{obs:feasible_ordering}, the segments
$f(v_1),f(v_2),\ldots$ are vertically ordered from bottom to top. We want to justify that the sets of feasible top edges associated with consecutive $f(v)$ segments are disjoint. Suppose that a supporting line $\ell$ defines
feasible top edges for both $f(v_i)$ and $f(v_{i+1})$. Let $p_i$ and
$p_{i+1}$ be the intersections of $\ell$ with $f(v_i)$ and
$f(v_{i+1})$, respectively. Since $f(v_{i+1})$ lies above $f(v_i)$ and
$\ell$ has positive slope, $p_{i+1}$ lies strictly above and to the
right of $p_i$. Since $v_i$ lies on the balanced left edge joining
$p_i$ to $v_i$, while $p_{i+1}$ lies strictly above $p_i$ and $p_{i,x}<p_{i+1,x}$, the balanced left edge joining $p_{i+1}$ to $v_{i+1}$ passes
strictly above $v_i$. Thus the quadrilateral determined by
$v_{i+1}$ contains the terrain vertex $v_i$ in its interior or on its boundary, contradicting feasibility. Hence, a feasible top edge for $f(v_i)$ cannot remain feasible for $f(v_{i+1})$, and therefore the sets of feasible top edges associated with consecutive segments are disjoint.   
\end{proof}

This immediately implies the following.
\begin{corollary}\label{cor:feasible-segment_pairwisedisjoint}
By repeated application of Lemma~\ref{lem:feasible-segment_disjoint}, the sets of feasible top edges associated with distinct segments $f(v_i)$ are pairwise disjoint.
\end{corollary}

Repeating the above procedure for every vertex
$v_i\in L'(r_Q)$ and retaining the largest area encountered yields the
largest convex quadrilateral whose left edge is a balanced chord and
whose right edge is $r_Q$.

For a fixed right edge $r_Q$, let $\rho_Q=r_Q\cap \partial\mathfrak T$ denote the point where the supporting line of $r_Q$ meets the terrain boundary. 

\begin{definition}\label{def:subterrain}
For a vertex $v\in L'(r_Q)$, let $\mathfrak T(f(v))$
denote the portion of the terrain lying vertically above the feasible
segment $f(v)$, bounded below by the horizontal line containing
$f(v)$ and above by the terrain chain. Equivalently,

\[
\mathfrak T(f(v))
=
\{\,p\in \mathfrak T :
p_y\ge 2v_y
\text{ and }
\lambda_x\le p_x\le {\rho_Q}_x
\,\},
\]
where $\lambda_x$ and  ${\rho_Q}_x$ are the $x$-coordinates of the left endpoint of $f(v)$ and the point $\rho_Q$, respectively.
We call $\mathfrak T(f(v))$ the {\sf subterrain} induced by
$f(v)$.
\end{definition}

For a fixed right edge $r_Q$, let $L'(r_Q)=\{v_1,v_2,\ldots,v_k\}$
be ordered from left to right along the terrain, and let $\lambda_i$ denote the left endpoint of the feasible segment $f(v_i)$.
By Observation~\ref{obs:feasible_ordering}, the feasible segments
$f(v_1),f(v_2),\ldots,f(v_k)$ are vertically ordered from bottom to top.
In general, however, the left endpoints $\lambda_1,\lambda_2,\ldots,\lambda_k$ need not be monotone in their $x$-coordinates. That is, there may exist indices $i$ and $j$ such that
$\lambda_{i,x}\le \lambda_{i+1,x}$ while $\lambda_{j,x}>\lambda_{j+1,x}$.
Consider a maximal subsequence $L_{mn}=\{v_m,v_{m+1},\ldots,v_n\}$
for which $\lambda_{\ell,x}\le \lambda_{\ell+1,x}$, for $\ell=m,m+1,\ldots,n-1$. For such a subsequence, the following nesting property holds.


\begin{lemma}\label{lem:lambda_monotone}
For each $v_\ell\in L_{mn}$, let
$S_\ell=V(\mathfrak T(f(v_\ell)))$ denote the set of terrain
vertices contained in the induced subterrain $\mathfrak T(f(v_\ell))$.
Then the sequence
$S_m,S_{m+1},\ldots,S_n$ is monotonically decreasing under set
inclusion; that is, for every $m\le i<j\le n$, $S_j\subseteq S_i$.

\end{lemma}

\begin{proof}
By definition of the subsequence $L_{mn}$, $\lambda_{m,x}\le \lambda_{m+1,x}\le \cdots \le \lambda_{n,x}$.

For each $v_\ell\in L_{mn}$, the induced subterrain
$\mathfrak T(f(v_\ell))$ is bounded on the left by the vertical line
through $\lambda_\ell$ and on the right by the fixed point $\rho_Q$.
Hence
\[
\mathfrak T(f(v_\ell))
=
\{p\in\mathfrak T :
\lambda_{\ell,x}\le p_x\le \rho_{Q,x},
\ p_y\ge 2v_{\ell,y}\}.
\]

Since the left endpoints move monotonically to the right along the
subsequence, every point of
$\mathfrak T(f(v_j))$ also belongs to
$\mathfrak T(f(v_i))$ whenever $i<j$.
Consequently, $\mathfrak T(f(v_j))\subseteq\mathfrak T(f(v_i))$, and therefore
\[
S_j
=
V(\mathfrak T(f(v_j)))
\subseteq
V(\mathfrak T(f(v_i)))
=
S_i .
\]
Thus the sequence
$S_m,S_{m+1},\ldots,S_n$
is monotonically decreasing under set inclusion.
\end{proof}

We next show that Lemma~\ref{lem:lambda_monotone} has a stronger structural property.

\begin{lemma}\label{lem:prefix_delete}
For consecutive vertices $v_i,v_{i+1}\in L_{mn}$, the set $S_{i+1}$ is obtained from $S_i$ by deleting a prefix of the left-to-right ordering of terrain vertices.
\end{lemma}

\begin{proof}
By Lemma~\ref{lem:lambda_monotone}, $S_{i+1}\subseteq S_i$ for $i=\{m,m+1,\ldots,n-1\}$. Moreover, by the definition of the subsequence $L_{mn}$, the left endpoints of the feasible segments satisfy $\lambda_{i,x}\le \lambda_{i+1,x}$, while the right boundary $\rho_Q$ remains fixed. Consequently, the induced subterrain $\mathfrak T(f(v_{i+1}))$ is obtained from $\mathfrak T(f(v_i))$ by moving only its left boundary rightward.

Since the terrain is $x$-monotone, every vertex that is excluded when passing from $S_i$ to $S_{i+1}$ lies to the left of every vertex that remains in $S_{i+1}$. Hence the removed vertices form a prefix in the left-to-right ordering of terrain vertices. Therefore, $S_{i+1}$ is obtained from $S_i$ by deleting a prefix.
\end{proof}




Next, consider a maximal decreasing subsequence $L_{nm}=\{v_n,v_{n-1},\ldots,v_m\}$, where the vertices are viewed in reverse order, and
$\lambda_{\ell,x}>\lambda_{\ell+1,x}$, for $\ell=m,m+1,\ldots,n-1$. By arguments symmetric to those of Lemmas~\ref{lem:lambda_monotone}
and~\ref{lem:prefix_delete}, we obtain the following.

\begin{corollary}\label{cor:reverse_monotone_subseq}
For each $v_\ell\in L_{mn}$, let $S_\ell=V(\mathfrak T(f(v_\ell)))$.
Then the sequence $S_m,S_{m+1},\ldots,S_n$ is monotonically increasing under set inclusion; that is, $S_i\subseteq S_j$ for every $m\le i<j\le n$.

Moreover, for consecutive vertices $v_i,v_{i+1}\in L_{mn}$,
the set $S_i$ is obtained from $S_{i+1}$ by deleting a prefix of the left-to-right ordering of terrain vertices.
\end{corollary}

The nesting structure of the induced subterrains established in
Lemma~\ref{lem:lambda_monotone}, Lemma~\ref{lem:prefix_delete}, and
Corollary~\ref{cor:reverse_monotone_subseq} enables an incremental
maintenance of their lower hulls. This leads to the following
algorithmic result.

\begin{lemma}\label{lem:balanced_left}
Given a $1.5$D terrain of $n$ vertices, a largest convex quadrilateral
whose top edge has positive slope, whose left edge is balanced, and
whose right edge is extremal, is computed in $O(n^2)$ time.
\end{lemma}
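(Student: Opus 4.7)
The plan is to show that for each candidate right edge $r_Q \in C_R$ the work needed to find the best balanced-left-edge quadrilateral can be done in $O(n)$ time, and then multiply by the $O(n)$ choices of $r_Q$ from Lemma~\ref{lem:cardinality} to obtain the $O(n^2)$ bound. Since all the geometric characterizations (the empty-trapezoid condition, the feasible regions $f(v)$, and the lower-hull processing) are already in place in the preceding discussion, the proof really reduces to a careful amortized running-time analysis.

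First I would fix $r_Q$ and describe the linear preprocessing: computing $L'(r_Q)$ by walking through the vertices of $\mathfrak{T}$ lying to the left of $r_Q$ and checking the empty-trapezoid condition for each, using the shortest path tree rooted at the left base vertex from~\cite{DasDM21} in $O(n)$ total time. Next, using Observations~\ref{obs:xandysorted} and~\ref{obs:feasible_ordering}, I would sort the vertices of $L'(r_Q)$ from left to right (which is automatic since they come in terrain order and their $y$-coordinates are then monotone), and for each $v_i$ compute $f(v_i)$ by scanning for the rightmost vertex $w \in L(r_Q)$ satisfying $y(w) \le 2y(v_i)$. Because the threshold $2y(v_i)$ is monotone in $i$, a single left-to-right sweep with a pointer suffices to produce all the $f(v_i)$'s in $O(n)$ total time.

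The heart of the argument is the amortized processing of the lower hulls $\mathcal{LH}(f(v_i))$. I would process the indices $i=1,2,\ldots,|L'(r_Q)|$ in order and maintain $\mathcal{LH}(f(v_i))$ incrementally: as $i$ increases, the bounding triangle changes (its left boundary shifts from $\ell_{v_i}(r_Q)$ to $\ell_{v_{i+1}}(r_Q)$, the apex $f(v_i)$ moves up to $f(v_{i+1})$, and $r_Q$ is unchanged), so new terrain vertices may enter the region while some edges of the current hull may be cut off. For each $i$, I would examine the edges of $\mathcal{LH}(f(v_i))$ in right-to-left order until the extension of an edge first fails to meet $f(v_i)$; by the convexity of the hull, no later edge can meet $f(v_i)$ either, so the scan stops. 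The two observations highlighted in the text---(i) the convexity cutoff just described, and (ii) that a feasible top edge for $f(v_i)$ is never a feasible top edge for $f(v_{i+1})$---together let me charge each hull edge to either a terrain vertex that was inserted or to a feasible top-edge candidate that will never be reconsidered. Summing these charges over all $i$ gives $O(n)$ total work for the fixed $r_Q$.

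The main obstacle will be justifying the amortized $O(n)$ bound on the lower-hull maintenance rigorously: one must argue that inserting the new terrain vertices entering the triangle as $i$ increases, and removing those whose extended edges no longer reach $f(v_i)$, can be done in amortized constant time per event. Because the new vertices arrive in terrain (i.e., $x$-sorted) order and the hull is a lower hull, a standard Graham-scan style incremental construction~\cite{Melkman87} achieves this; the only subtlety is that edges are also ``killed'' from the left as the apex rises, but each such edge contributes $O(1)$ to the total and is then never touched again. Multiplying the per-$r_Q$ bound of $O(n)$ by the $O(n)$ choices of $r_Q$ yields the claimed $O(n^2)$ time complexity, which together with the symmetric analysis for negative-slope top edges completes the proof.
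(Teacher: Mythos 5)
Your proposal is correct and follows essentially the same route as the paper's proof: fix $r_Q$, compute $L'(r_Q)$ and the feasible regions $f(v)$ in linear time via the shortest path tree, and then amortize the lower-hull processing over all $v \in L'(r_Q)$ using the one-sided insertion/deletion structure of the hull (the paper cites a dynamic-hull result for one-ended monotone sequences where you invoke a Melkman/Graham-scan style charging argument, but the content is the same). The per-$r_Q$ bound of $O(n)$ times the $O(n)$ candidate right edges from Lemma~\ref{lem:cardinality} gives $O(n^2)$, exactly as in the paper.
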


\begin{proof}
Fix a candidate extremal right edge $r_Q$ and consider the set $L'(r_Q)=\{v_1,v_2,\ldots,v_k\}$ be ordered from left to right. Since the left endpoints $\lambda_1,\lambda_2,\ldots,\lambda_k$ of the feasible segments
need not be globally monotone, we partition the sequence into
maximal contiguous subsequences $L_1,L_2,\ldots,L_t$, such that within each subsequence the sequence of left endpoints
is monotone, that is, either nondecreasing or nonincreasing.

Consider one such subsequence $L_j=\{v_a,v_{a+1},\ldots,v_b\}$.

Suppose first that the sequence of left endpoints is nondecreasing; thatt is $\lambda_{a,x}\le \lambda_{a+1,x} \le \cdots \le \lambda_{b,x}$. Then, by Lemmas~\ref{lem:lambda_monotone}
and~\ref{lem:prefix_delete}, the corresponding induced subterrains satisfy $S_b \subseteq S_{b-1}\subseteq \cdots \subseteq S_a$, and every consecutive subterrain is obtained from the previous
one by deleting a prefix of the left-to-right terrain order.
Hence the lower hull may be maintained incrementally by scanning
the subsequence in reverse order, $v_b,v_{b-1},\ldots,v_a$.

Similarly, if $\lambda_{a,x}> \lambda_{a+1,x}> \cdots> \lambda_{b,x}$, then by Corollary~\ref{cor:reverse_monotone_subseq}, $S_a\subseteq S_{a+1}\subseteq\cdots\subseteq S_b$, and every consecutive subterrain differs by deleting a prefix. In this case, the subsequence is processed in forward order, $v_a,v_{a+1},\ldots,v_b$.

In either case, the lower hull of the current subterrain is
maintained using the stack representation of Andrew’s monotone
chain algorithm~\cite{Andrew79}. Since consecutive induced subterrains differ only by the deletion of a prefix of terrain vertices, each terrain vertex is inserted into the lower-hull stack at most once and removed from it at most once during the processing of a subsequence. Therefore, the total number of hull updates over a subsequence is linear in the number of terrain vertices involved.

For a fixed vertex $v_i$, every feasible top edge is induced by
an edge of the maintained lower hull of $S_i$. Each hull edge is
tested for feasibility in constant time by intersecting its
supporting line with $f(v_i)$ and the fixed right edge $r_Q$,
followed by a constant-time convexity and area computation. Moreover, by Corollary~\ref{cor:feasible-segment_pairwisedisjoint},
feasible top edges corresponding to distinct feasible segments are
pairwise disjoint. Therefore a hull edge can become feasible for at
most one feasible segment, implying that each feasible top edge is
generated and processed exactly once.


Since the subsequences form a partition of $L'(r_Q)$, the total
processing time over all subsequences is $O(n)$ for a fixed extremal
right edge $r_Q$. By Lemma~\ref{lem:cardinality}, there are $O(n)$ candidate extremal right edges. Repeating the above procedure for each such edge yields an overall running time of $O(n^2)$.
\end{proof}

The preceding lemmas handle all maximal quadrilaterals whose top edge
has positive slope. By symmetry, analogous results hold for
negative-slope top edges. Remark~\ref{rem:negative} sketches the
argument for the extremal case, while the balanced case follows by a
similar argument. Combining all cases yields the following theorem.

\begin{theorem}
    Given a $1.5$D terrain of $n$ vertices, a largest convex quadrilateral  is computed in $O(n^2)$ time. 
\end{theorem}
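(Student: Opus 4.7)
The plan is to assemble the theorem from the structural lemmas already established. By Lemma~\ref{lem:base_on_base}, there exists an optimal $Q^*$ whose base $\cal B$ coincides with $\Pi$, so we may restrict attention to maximal quadrilaterals of this form. Lemma~\ref{lem:slope} further restricts the slopes of $\cal L$ and $\cal R$ to lie in $(0^\circ, 90^\circ]$. The optimal top edge $\cal T$ has either a positive or a negative slope (the degenerate horizontal/triangle case being excluded by assumption and the triangle case being handled by~\cite{CabelloDDM25}), so it suffices to handle the positive-slope case and invoke symmetry for the negative-slope case.

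Fixing the positive-slope case, I would split into two mutually exhaustive subcases driven by Lemma~\ref{lem:twopoint}, Lemma~\ref{lem:midpoint}, and Observation~\ref{obs:mix1}: (i) the right edge $\cal R$ is supported by an extremal chord (forced, since $\cal T$ has positive slope), and $\cal L$ is also supported by an extremal chord; (ii) $\cal R$ is extremal and $\cal L$ is a balanced chord through a single vertex of $\mathfrak{T}$. These two cases exhaust the structural characterization of any maximal $Q'$ with positively sloped top. Lemma~\ref{lem:extremal_left} computes the best quadrilateral in case (i) in $O(n^2\log n)$ time, and Lemma~\ref{lem:balanced_left} handles case (ii) in $O(n^2)$ time.

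A symmetric argument, obtained by reflecting the terrain about a vertical axis, handles the case where $\cal T$ has a negative slope: now $\cal L$ is forced to be supported by an extremal chord, and $\cal R$ is either supported by an extremal chord or is a balanced chord. The same two algorithms, with the roles of left and right interchanged, apply in identical time bounds. Taking the maximum-area quadrilateral over all four subcases yields $Q^*$, and the running time is dominated by the $O(n^2\log n)$ bound from Lemma~\ref{lem:extremal_left}.

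The only delicate point, and the one I would flag as the main obstacle to articulate cleanly, is verifying that the enumeration above is genuinely exhaustive over all configurations of a maximal $Q'$ inscribed in $\mathfrak{T}$. Concretely, one must argue (from Observation~\ref{obs:touch}, Lemma~\ref{lem:twopoint}, and Lemma~\ref{lem:midpoint}) that once the slope of $\cal T$ is fixed, the ``extremal vs.\ balanced'' dichotomy on the remaining non-base edge is complete, so no maximal quadrilateral escapes the four subcases. Once this is established, the theorem follows by combining the running times.
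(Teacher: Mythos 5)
Your proposal is correct and follows essentially the same route as the paper, which proves this theorem simply by combining Lemma~\ref{lem:extremal_left} (left and right edges extremal, $O(n^2\log n)$) with Lemma~\ref{lem:balanced_left} (left edge balanced, $O(n^2)$) for a positively sloped top edge, and invoking symmetry for the negative-slope case; the exhaustiveness of the ``extremal vs.\ balanced'' dichotomy that you flag is exactly what the paper establishes via Result~\ref{res:butterfly} and the overlapping V-butterfly argument at the start of Section~\ref{sec:quad_algo1}. One minor caveat: a horizontal top edge does not degenerate the quadrilateral into a triangle, so it should be folded into the ``non-positive slope'' branch (as the paper does) rather than excluded.
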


\section[A 1/2-Approximation via an Axis-Parallel Rectangle]{A $\frac{1}{2}$-Approximation via an Axis-Parallel Rectangle}
It is known that a largest axis-parallel rectangle ($\bigbox^*$) inside a simple polygon with $n$ vertices is computed in $O(n \log n)$ time. In particular, Daniels et al.~\cite{DanielsMR97} gave such an algorithm for vertically separated, horizontally convex polygons; Boland and Urrutia~\cite{BolandU01a} extended the result to simple polygons via decomposing the polygon into separated axis monotone polygons. Since a $1.5$D terrain is a simple polygon, a largest axis-parallel rectangle inside it can therefore be computed in $O(n \log n)$ time. In this section, we prove that such a rectangle, $\bigbox^*$, provides a $\tfrac{1}{2}$-approximation to the largest convex quadrilateral problem.

Let $Q^*=\Diamond \alpha\beta\gamma\delta$ be a largest-area convex quadrilateral contained in the terrain $\mathfrak{T}$. We show that there exists an axis-parallel rectangle $\bigbox \subseteq Q^*$ such that $\mathrm{area}(\bigbox) \ge \tfrac{1}{2}\,\mathrm{area}(Q^*)$. Let $\bigbox^*$ denote a largest axis-parallel rectangle contained in $\mathfrak{T}$. Since $\bigbox^*$ maximizes area over all such rectangles in $\mathfrak{T}$, it follows that $\mathrm{area}(\bigbox^*) \ge \mathrm{area}(\bigbox) \ge \tfrac{1}{2}\,\mathrm{area}(Q^*)$. We therefore obtain the following.

\begin{theorem}\label{thm:apprx}
Given a $1.5$D terrain with $n$ vertices, a largest axis-parallel rectangle inside the terrain yields a $\frac{1}{2}$-approximation to the largest convex quadrilateral problem. Such a rectangle is computed in $O(n \log n)$ time.
\end{theorem}

\begin{proof} 
  Let $Q^* = \Diamond \alpha\beta\gamma\delta$ be a largest convex quadrilateral contained in the terrain $\mathfrak{T}$. 
 Without loss of generality, assume that the vertices are labeled in clockwise order and that $\overline{\alpha\delta}$ is the base of the terrain. Let $\beta$ and $\gamma$ denote the top-left and top-right vertices of $Q^*$, respectively. We consider two cases depending on the relative heights of $\beta$ and $\gamma$.

\paragraph*{{\bf Case (i)} $\mathbf{\beta_y > \gamma_y:}$} Let $\nu$ be the midpoint of segment $\overline{\alpha\beta}$. Through $\nu$, draw a line parallel to $\alpha\delta$. Let this line intersect the boundary of $Q^*$ at a point $\zeta$. Denote by $\nu'$ and $\zeta'$ the orthogonal projections of $\nu$ and $\zeta$, respectively, onto the line $\overline{\alpha\delta}$.
We distinguish two subcases.

\textbf{sub-case(a): horizontal line through $\nu$ intersects segment $\overline{\beta\gamma}$}. Extend segment $\overline{\beta\gamma}$ until it meets the extension of $\overline{\alpha\delta}$ at a point $\kappa$. Since $Q^* \subset \triangle \alpha\beta\kappa$, we have $\triangle \alpha\beta\kappa \geq \mathrm{area}(Q^*)$.

\begin{figure}[!htbp]
    \centering
    \includegraphics[scale=0.5]{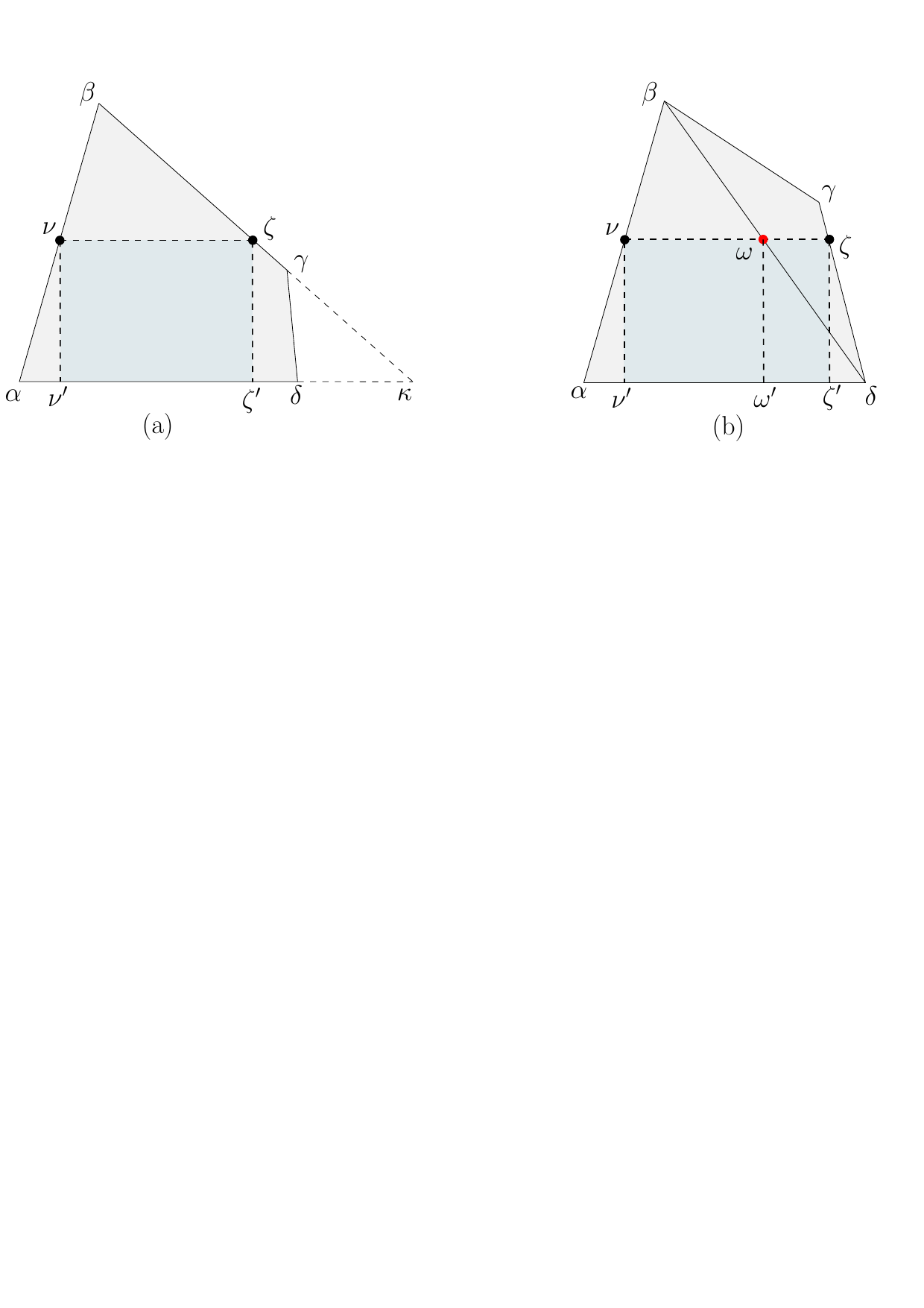}
   \caption{Case - $(\beta_y>\gamma_y)$, where in (a) the line segment $\overline{\nu\zeta}$ intersects segment $\overline{\beta\gamma}$, and in (b) the line segment $\overline{\nu\zeta}$ intersects segment $\overline{\gamma\delta}$.}
        \label{fig:approx}
  \end{figure}
  
Because $\nu$ is the midpoint of $\alpha\beta$ and $\overline{\nu\zeta} \parallel \overline{\alpha\kappa}$, it follows from Fact~\ref{res:midpoint_thm} that $\zeta$ is the midpoint of $\overline{\beta\kappa}$. Consequently, for rectangle $\bigbox \nu'\nu\zeta\zeta'$ we have, $\bigbox \nu'\nu\zeta\zeta' = \frac{1}{2}\triangle \alpha\beta\kappa$. Therefore,
$\bigbox \nu'\nu\zeta\zeta' \ge \frac{1}{2}\mathrm{area}(Q^*)$.

\textbf{sub-case(b): horizontal line through $\nu$ intersects segment $\overline{\gamma\delta}$.} Let $\omega = \overline{\nu\zeta}\cap\overline{\beta\delta}$, and denote by $\omega'$ the projection of $\omega$ onto $\overline{\alpha\delta}$. Using Fact~\ref{res:midpoint_thm}, we obtain $\bigbox \nu'\nu\omega\omega' = \frac{1}{2}\triangle \alpha\beta\delta$. 

Since $\omega$ is the midpoint of $\overline{\beta\delta}$, we have
$\triangle \omega\gamma\delta =\frac{1}{2}\triangle \beta\gamma\delta$.
Furthermore, $\overline{\omega\zeta}\parallel\overline{\alpha\delta}$ and
$\beta_y>\gamma_y$. Therefore, the point $\zeta$ lies above the midpoint
of $\overline{\gamma\delta}$, implying that $\triangle \omega\zeta\delta>\frac{1}{2}\triangle \omega\gamma\delta$.

Since the triangle $\triangle \omega\zeta\delta$ and the rectangle
$\bigbox \omega'\omega\zeta\zeta'$ have the same base
$\overline{\omega\zeta}$ and the same height, $\bigbox \omega'\omega\zeta\zeta' = 2\triangle \omega\zeta\delta>\triangle \omega\gamma\delta=\frac{1}{2}\triangle \beta\gamma\delta$.

Consequently, $\bigbox \nu'\nu\zeta\zeta'=\bigbox \nu'\nu\omega\omega'
+\bigbox \omega'\omega\zeta\zeta'>\frac{1}{2}\triangle \alpha\beta\delta
+\frac{1}{2}\triangle \beta\gamma\delta = \frac{1}{2}\Diamond \alpha\beta\gamma\delta=\frac{1}{2}\mathrm{area}(Q^*)$.

\paragraph*{{\bf Case (ii)} $\mathbf{\beta_y = \gamma_y :}$}
In this case, segment $\overline{\beta\gamma}$ is parallel to $\overline{\alpha\delta}$. Let $\beta'$ and $\gamma'$ be the projections of $\beta$ and $\gamma$, respectively, onto $\overline{\alpha\delta}$. Let, $\eta,\zeta,\kappa$, and $\omega$ be the midpoint of $\overline{\alpha\beta},\overline{\beta\beta'},\overline{\gamma\gamma'}$, and $\overline{\gamma\delta}$, respectively. Since $\overline{\beta\gamma} \parallel \overline{\alpha\delta}$, the points $\eta,\zeta,\kappa$, and $\omega$ are collinear. Let $\eta'$ and $\omega'$ be the respective projections of $\eta$ and $\omega$ onto $\overline{\alpha\delta}$. Consequently, we have:

\begin{align*}
    \bigbox \eta'\eta\omega\omega' =& \bigbox \eta'\eta\zeta\beta' + \bigbox \beta'\zeta\kappa\gamma' + \bigbox \gamma'\kappa\omega\omega' &\\= &\frac{1}{2} \triangle \alpha\beta\beta' + \frac{1}{2} \bigbox \beta'\beta\gamma\gamma' + \frac{1}{2} \triangle \gamma'\gamma\delta &\\= & \frac{1}{2} \bigbox \alpha\beta\gamma\delta.
\end{align*} 

Hence, an axis-parallel rectangle inscribed in $Q^*$ has area at least 
$\frac{1}{2}\mathrm{area}(Q^*)$.

Similar to Case (i), for \textbf{Case }$\mathbf{\beta_y < \gamma_y}$ also we can show that there exists an axis-parallel rectangle inscribed in $Q^*$ which has area at least $\frac{1}{2}\mathrm{area}(Q^*)$.

Therefore, in all the cases, we have shown that there exists an axis-parallel rectangle $\bigbox \subseteq Q^*$ such that $\mathrm{area}(\bigbox) \ge \frac{1}{2}\mathrm{area}(Q^*)$.

Let $\bigbox^*$ denote the largest-area axis-parallel rectangle contained in the terrain $\mathfrak{T}$. Since $\bigbox^*$ maximizes area over all such rectangles in $\mathfrak{T}$, we have $\mathrm{area}(\bigbox^*) \ge \mathrm{area}(\bigbox) \ge \frac{1}{2}\mathrm{area}(Q^*)$.  

A maximum-area axis-parallel rectangle contained in a terrain is computed in $O(n\log n)$ time using the algorithm of Boland and Urrutia~\cite{BolandU01a} for simple polygons. This completes the proof.
\end{proof}

\bibliographystyle{splncs04}
 \bibliography{quad_conf}

\end{document}